\newcommand{\A}{{\rm A}}
\newcommand{\B}{{\rm B}}
\def\O{{\rm O}}
\def\d{{\rm d}}
\newcommand{\R}{{\mathbb R}}
\newtheorem{theorem}{Theorem}[section]
\newtheorem*{theorem*}{Theorem}
\newtheorem{lemma}[theorem]{Lemma}
\newtheorem{proposition}[theorem]{Proposition}
\newtheorem{corollary}[theorem]{Corollary}
\newtheorem{definition}[theorem]{Definition}
\title{\large{\bf Some simple results about the Lambert problem}}
\author{}
\date{}
\begin{document}
	
	\maketitle
	
	\begin{center}
	{\bf Alain Albouy$^{1}$,\qquad Antonio J. Ure\~na$^{2}$}
	
	\bigskip
	$^{1}$ IMCCE, CNRS-UMR8028, Observatoire de Paris
	
        77, avenue Denfert-Rochereau, 75014 Paris, France
         
         Alain.Albouy@obspm.fr
         
         \bigskip
         
         $^{2}$ Departamento de Matematica Aplicada, Facultad de Ciencias
         
         Campus Universitario de Fuentenueva
         
         Universidad de Granada, 18071, Granada, Spain
         
        ajurena@ugr.es
        
        \bigskip
	\end{center}

\section*{Abstract} We give simple proofs of some simple statements concerning the Lambert problem. We first restate and reprove the known existence and uniqueness results for the Keplerian arc. We also prove in some cases that the elapsed time is a convex function of natural parameters. Our statements and proofs do not distinguish between the three types of Keplerian conic section, elliptic, parabolic and hyperbolic. We also prove non-uniqueness results and non-convexity results. We do not develop any algorithm of resolution, limiting ourselves to such obviously useful {\it a priori} questions: How many solutions should we expect? Can we be sure that the Newton method will converge?

\section{Introduction}

The Lambert problem is a boundary value problem for the Kepler problem. It is stated as follows: {\it find the Keplerian arcs around a fixed center $\O$ which go from a given point $\A$ to a given point $\B$ in a given elapsed time $T$.}
The problem was briefly posed by Lambert in 1761 in a letter to Euler (see \cite{alb}, \cite{bopp}). Gauss posed it again in his {\it Theoria motus} \cite{gauss}, \S 84, and proposed two numerical methods to solve it quickly and accurately (\S 85--87 and \S88--105). The problem was named the Lambert problem in the 1960's since it is closely related to Lambert's theorem \cite{lambert}. Lagrange~\cite{lagrange4} did not study the Lambert problem, but he proposed trigonometric formulas to reprove Lambert's theorem, and since Gauss, these formulas are commonly used to solve the Lambert problem.

Many authors studied the problem, first in the same context as Gauss, i.e., as an ingredient in a method of orbit determination, then in other contexts related to the space conquest. This abundant literature is mainly concerned with the development of algorithms. Only few authors tried to improve our {\it a priori} knowledge of the number of solutions, the reason being probably that the first attempts which present themselves are usually blocked by technical difficulties. The usual Keplerian recipe proposes elementary, but complicated expressions, which furthermore change according to the type of conic section. The object of our work is to show the effectiveness of other ideas.

An {\it a priori} knowledge of the number of solutions of the Lambert problem is clearly useful when conceiving an algorithm to find them. But it may also address the question: after separating at a point, may two orbiting bodies meet again at another point?

The number of solutions of the Lambert problem is carefully discussed by Eliasberg in his book \cite{eli}. He gives very complete results, including the number of roots corresponding to arcs with several revolutions. Many of his arguments consist in direct computations and observations on classical series. But apparently he also argues on graphics drawn for particular values of the parameters (see top of his page 116). We do not consider that his results are completely established.

Lancaster and Blanchard \cite{lb} draw a clear figure where the number of roots appear in all cases, as well as the places where the function $T$ is not convex. We will relate their variables to the rectilinear case and confirm their results. 

Sim\'o \cite{simo} strongly influenced our work. He is the only author we know to state, as a theorem, an existence and uniqueness result for the Lambert problem. He insists on convexity results as a guaranty of convergence of the Newton method. We will show that these results are also useful to count rigorously the arcs with several revolutions.

In a recent work, De La Torre, Flores and Fantino \cite{dff} present and continue Sim\'o's work, after a detailed review of the literature on the Lambert problem.

\section{Uniqueness of the direct symmetric arc}

\begin{definition}
In the plane $\O xy$, a Keplerian arc around the center $\O$  is an orbit of the Newton equation
\begin{equation}
\ddot x=-\frac{x}{r^3},\quad \ddot y=-\frac{y}{r^3},\quad\hbox{where }r=\sqrt{x^2+y^2},
\end{equation}
restricted to a bounded and closed interval of time. The {\it body} $q=(x,y)$ starts from an {\it initial point} $\A\in\O xy$ at a time $t_\A$ and arrives at a {\it final point} $\B\in \O xy$ at a time $t_\B>t_\A$.
\end{definition}

We restrict our study to a plane $\O xy$. This convention only makes a small difference in the question of the number of solutions of the Lambert problem: If $\A$, $\O$ and $\B$ are collinear in this order, any individual arc going from $\A$ to $\B$ in the plane $\O xy$ generates an infinite family of arcs in the space $\O xyz$, by merely rotating it around the line $\A\O\B$.

\begin{definition} A rectilinear arc is a Keplerian arc where the body remains on the same ray drawn from the center $\O$.
\end{definition}

Clearly, if $\A$ and $\B$ are distinct and on a same ray, all the arcs going from $\A$ to $\B$ are rectilinear.

\begin{definition} A Keplerian arc is called symmetric if its ends $\A$ and $\B$ are symmetric with respect to the principal axis of the Keplerian conic section.
\end{definition}

If $\A$ and $\B$ are distinct and at the same distance from $\O$, all the Keplerian arcs going from $\A$ to $\B$ are symmetric. This can be shown by using the focus-directrix property of a conic section: the two points being at the same distance from the focus $\O$, they are also at the same distance from the directrix.

\begin{definition} A Keplerian arc around $\O$ is called {\it indirect} if its convex hull contains $\O$. It is called {\it direct} otherwise. An indirect arc is called {\it multi-revolution} if the orbit is periodic and the elapsed time $t_\B-t_\A$ is greater than a period. It is called {\it simple} otherwise.
\end{definition}

These definitions apply to rectilinear arcs that are {\it extended after collision with the origin $\O$} (see \cite{alb}). The indirect rectilinear arcs are those which do collide.
For a nonrectilinear arc, we have the following characterization: the arc is direct if and only if the (positive) angle described along the arc is less than $\pi$.

\begin{proposition}\label{TDS}
In the plane $\O xy$, the direct Keplerian arcs around $\O$ whose ends $\A$ and $\B$ are distinct, symmetric with respect to the vertical axis $\O y$ and placed at a positive ordinate, are parametrized by the ``signed eccentricity'' $\eta$ varying in the interval $]-\infty,1[$. The elapsed time $T_D^S=t_\B-t_\A$ is an increasing function of $\eta$ with nonzero derivative. We have: $T_D^S\to 0$ when $\eta\to -\infty$ and $T_D^S\to +\infty$ when $\eta\to 1$.
\end{proposition}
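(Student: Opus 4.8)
The plan is to work in polar coordinates $(r,\theta)$ centred at $\O$, chosen so that $\B$ has polar angle $\alpha\in\,]0,\pi/2[$, $\A$ has polar angle $\pi-\alpha$, and both points lie at distance $r_0:=|\O\A|=|\O\B|$ from $\O$; this is possible exactly because $\A$ and $\B$ are distinct, symmetric with respect to $\O y$ and of positive ordinate. As recalled before the statement, any Keplerian arc joining two points equidistant from the focus $\O$ is symmetric, and its principal axis is then the perpendicular bisector of $\A\B$, which is $\O y$. Hence the conic supporting such an arc has a polar equation $r=p/(1-\eta\sin\theta)$ with $p>0$ and $\eta\in\R$, its eccentricity being $|\eta|$; so $\eta$ is a bona fide ``signed eccentricity''. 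Requiring the conic to pass through $\A$ and $\B$ forces $p=r_0(1-\eta\sin\alpha)$. The direct arc from $\A$ to $\B$ is the one passing through $\theta=\pi/2$ (``over the top''): it sweeps the angle $\pi-2\alpha<\pi$, hence is direct, and it is a genuine compact arc --- with $r>0$ along it --- exactly when $1-\eta\sin\theta>0$ on $[\alpha,\pi-\alpha]$; since $\sin\theta$ there attains the value $1$ and never exceeds it, this amounts to $\eta<1$. Distinct values of $\eta$ give distinct conics, hence distinct arcs, and each conic of the family carries a unique such arc from $\A$ to $\B$; this is the announced parametrization by $\eta\in\,]-\infty,1[$.

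I would next express the elapsed time through the angular-momentum relation $\d t=(r^2/c)\,\d\theta$, with $c=\sqrt p$ the angular momentum per unit mass (the gravitational parameter being $1$). This expression makes no distinction between ellipse, parabola and hyperbola, and yields
\[ T_D^S(\eta)=p^{3/2}\int_\alpha^{\pi-\alpha}\frac{\d\theta}{(1-\eta\sin\theta)^2}=r_0^{3/2}\,(1-\eta\sin\alpha)^{3/2}\int_\alpha^{\pi-\alpha}\frac{\d\theta}{(1-\eta\sin\theta)^2}. \]
The two limit claims then follow at once. When $\eta\to-\infty$ the integrand is bounded on $[\alpha,\pi-\alpha]$ by $(1+|\eta|\sin\alpha)^{-2}$, so $0<T_D^S(\eta)\le r_0^{3/2}(\pi-2\alpha)(1+|\eta|\sin\alpha)^{-1/2}\to0$. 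When $\eta\to1^-$ the prefactor tends to $r_0^{3/2}(1-\sin\alpha)^{3/2}>0$, while the integrand increases pointwise, so by monotone convergence the integral tends to $\int_\alpha^{\pi-\alpha}(1-\sin\theta)^{-2}\,\d\theta=+\infty$ (the singularity at $\theta=\pi/2$, where $1-\sin\theta\sim\frac12(\theta-\frac\pi2)^2$, is non-integrable); hence $T_D^S\to+\infty$.

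The core of the proof is monotonicity. Differentiating under the integral sign --- legitimate since the integrand is smooth in $\eta$ uniformly on compact subintervals of $]-\infty,1[$ --- and reducing to a common denominator gives
\[ (T_D^S)'(\eta)=r_0^{3/2}\,(1-\eta\sin\alpha)^{1/2}\int_\alpha^{\pi-\alpha}\frac{2\sin\theta-\frac32\sin\alpha-\frac12\eta\sin\alpha\sin\theta}{(1-\eta\sin\theta)^3}\,\d\theta. \]
On $[\alpha,\pi-\alpha]$ the denominator is positive, and the numerator $N(\theta)$ is an affine function of $\sin\theta$ with leading coefficient $2-\frac12\eta\sin\alpha$, which is strictly positive for every $\eta<1$; therefore $N$ is minimal over the arc at its endpoints, where $\sin\theta=\sin\alpha$ and $N=\frac12\sin\alpha(1-\eta\sin\alpha)>0$. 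Thus $N>0$ throughout, the integral is positive, and $(T_D^S)'(\eta)>0$ for every $\eta\in\,]-\infty,1[$. Combined with the two limits, this shows that $\eta\mapsto T_D^S(\eta)$ is an increasing smooth bijection of $]-\infty,1[$ onto $]0,+\infty[$. I expect the one genuinely delicate point to be the verification that $N(\theta)$ keeps a constant positive sign along the arc; everything else is bookkeeping.
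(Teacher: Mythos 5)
Your proof is correct and follows essentially the same route as the paper: the conic through the symmetric ends is written in polar form $r=p/(1-\eta\sin\theta)$ with $p$ determined by $\eta$, the elapsed time is expressed via the angular momentum as $p^{3/2}\int(1-\eta\sin\theta)^{-2}\,\d\theta$, and monotonicity comes from differentiating under the integral sign and checking that the resulting kernel (your $N$, the paper's $K=2\sin\theta-\tfrac32\sin\theta_\A-\tfrac12\eta\sin\theta_\A\sin\theta$) is positive on the arc, which you verify by an endpoint argument equivalent to the paper's algebraic rewriting. You additionally write out the boundary limits $T_D^S\to 0$ and $T_D^S\to+\infty$, which the paper leaves implicit.
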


\begin{figure}
	\resizebox{5.5cm}{6cm}{\includegraphics{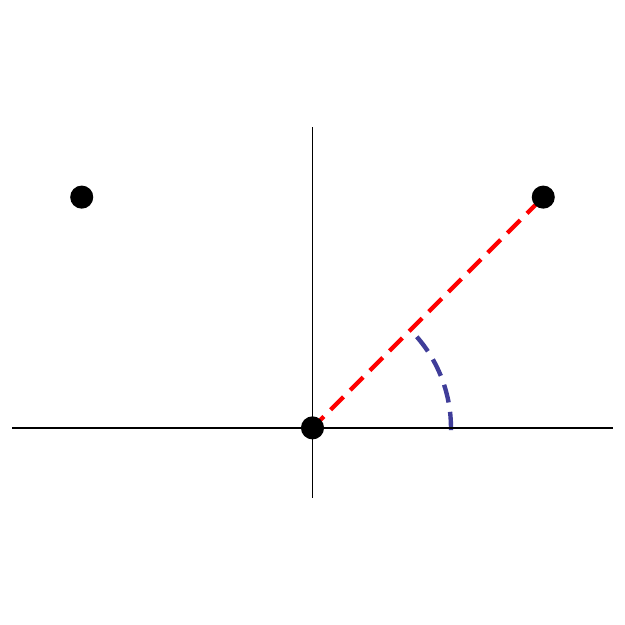}\put(-40,95){\footnotesize $r_\A$}\put(-45,70){\footnotesize $\theta_\A$}\put(-20,130){\footnotesize $\A$}\put(-170,130){\footnotesize $\B$}\put(-105,139.5){\scriptsize $y\hspace{-0.6mm}\uparrow$}\put(-10,45){\scriptsize $x\rightarrow$}\put(-100,45){\footnotesize $\O$}}\hspace{1.5cm}\resizebox{5.5cm}{6cm}{\includegraphics{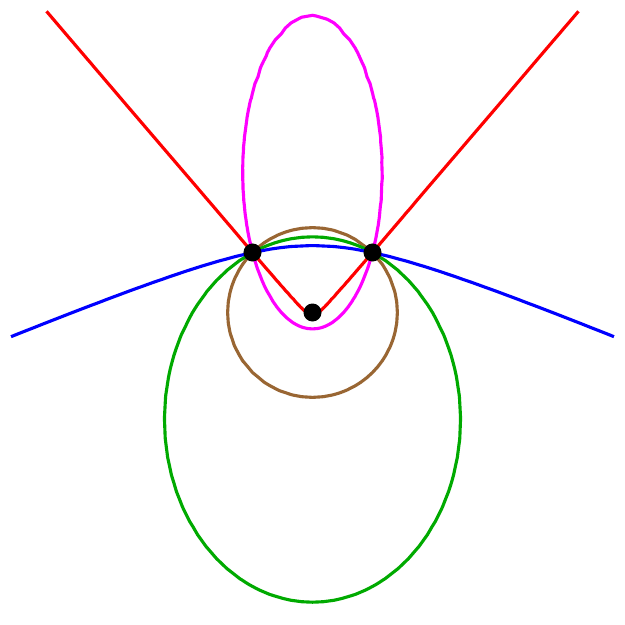}\put(5,72){\scriptsize $\eta\ll-1$}\put(-109,182){\scriptsize $0<\eta<1$}\put(-60,15){\scriptsize $-1<\eta<0$}\put(-100,53){\scriptsize $\eta=0$}\put(-25,150){\footnotesize $\eta\approx 1/\sin\theta_\A$}\put(-103,-15){\small\em (b)}\put(-313,-15){\small\em (a)}}
	\caption{{\em (a):} After a planar isometry, the endpoints $\A$, $\B$ of a symmetric Keplerian arc are located symmetrically with respect to the vertical axis. {\em (b):} The set of conic branches with a focus at $\O$ and passing through $\A$ and $\B$ can be parameterized by the signed eccentricity $\eta<1/\sin\theta_A$.}
\end{figure}

\begin{corollary} In the Euclidean plane or space, consider three points $\O$, $\A$, $\B$ forming a non-flat triangle with $\|\O\A\|=\|\O\B\|$.
There is a unique direct Keplerian arc around $\O$ going from $\A$ to $\B$ in a given positive elapsed time. This arc is in the plane $\O\A\B$ and is symmetric with respect to the perpendicular bisector of $\A \B$.
\end{corollary}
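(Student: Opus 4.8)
The plan is to derive the corollary from Proposition~\ref{TDS} by reducing the given (possibly spatial) configuration to the normalized planar one treated there.

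\emph{Reduction to the plane $\O\A\B$.} Every Keplerian orbit around $\O$ has constant angular momentum $q\times\dot q$; if this vector vanishes the orbit is rectilinear, supported by a ray through $\O$, and otherwise it lies in the plane through $\O$ orthogonal to it. Since $\O,\A,\B$ form a non-flat triangle, the three points are distinct and non-collinear: hence no rectilinear arc joins $\A$ to $\B$, and any Keplerian arc from $\A$ to $\B$ lies in the unique plane through the three points, namely $\O\A\B$. In particular the spatial problem reduces to the planar one.

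\emph{Reduction to the symmetric normal form.} By the remark preceding the corollary, the equality $\|\O\A\|=\|\O\B\|$ forces every Keplerian arc from $\A$ to $\B$ to be symmetric, i.e.\ $\A$ and $\B$ are reflections of one another across the principal axis of its conic. As $\A\neq\B$, that axis is precisely the perpendicular bisector of $\A\B$; being the axis through the focus $\O$, it also contains $\O$, in accordance with $\|\O\A\|=\|\O\B\|$. Choosing an orthonormal frame of the plane $\O\A\B$ centred at $\O$ with the $y$-axis along this bisector, one gets $\A=(a,b)$ and $\B=(-a,b)$ with $a\neq0$ since $\A\neq\B$ and $b\neq0$ since the triangle is non-flat; after possibly replacing $y$ by $-y$ we may assume $b>0$. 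This is exactly the hypothesis of Proposition~\ref{TDS}.

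\emph{Conclusion.} By Proposition~\ref{TDS} the direct Keplerian arcs from $\A$ to $\B$ are in bijection with $\eta\in{]-\infty,1[}$, and along this family $\eta\mapsto T_D^S(\eta)$ is continuous (it is differentiable), strictly increasing (its derivative never vanishes), with $T_D^S(\eta)\to0$ as $\eta\to-\infty$ and $T_D^S(\eta)\to+\infty$ as $\eta\to1$; hence it is a homeomorphism of ${]-\infty,1[}$ onto ${]0,+\infty[}$. Therefore, for each prescribed positive elapsed time $T$ there is a unique admissible $\eta$, hence a unique direct arc, and by construction it lies in the plane $\O\A\B$ and is symmetric with respect to the perpendicular bisector of $\A\B$. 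The only steps requiring care are the two reductions above—discarding rectilinear arcs and identifying the principal axis with the perpendicular bisector—since once the configuration is in normal form the statement is an immediate consequence of the proposition.
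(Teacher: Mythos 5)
Your proof is correct and follows essentially the same route as the paper: reduce to the planar symmetric normal form and then use the parametrization by $\eta$ together with the monotonicity and limits of $T_D^S(\eta)$ from Proposition \ref{TDS}. The only difference is bookkeeping: the paper's printed proof of the corollary actually carries out the $\eta$-parametrization and the positivity of $\d T_D^S/\d\eta$ (i.e.\ it proves Proposition \ref{TDS} on the spot), whereas you cite the proposition as a black box and instead spell out the reduction steps --- planarity via conservation of angular momentum and the identification of the principal axis with the perpendicular bisector of $\A\B$ --- which the paper leaves implicit.
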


\begin{proof}
We already proved that the arc is symmetric with respect to the $\O y$ axis. It thus belongs to a conic section with polar equation
\begin{equation}
r=\frac{C^2}{1-\eta \sin \theta},
\end{equation}
with $C>0$ and $\eta\in \R$. The absolute value $|\eta |$ is the eccentricity, $C$ is the angular momentum, $C^2$ is the semi-parameter. The conic section passes through $\A$, of polar coordinates $(r_\A,\theta_\A)$, with $\theta_\A\in\,]0,\pi/2[$, and through $\B$, of polar coordinates $(r_\B,\theta_\B)=(r_\A,\pi -\theta_\A)$. This is expressed by the single condition
\begin{equation}
r_\A=\frac{C^2}{1-\eta \sin \theta_\A},
\end{equation}
which gives  $C^2$ as a function of $\eta\in\,]-\infty, 1/\sin\theta_A[$, and the new form of the polar equation
\begin{equation}
r=r_\A\frac{1-\eta \sin\theta_\A}{1-\eta\sin\theta}.
\end{equation}
At the ends of the interval, the arc is a limit of hyperbolas. When $\eta\to -\infty$, it is a segment going from $\A$ to $\B$. When $\eta\to 1/\sin\theta_\A$, it is a pair of segments, from $\A$ to $\O$ and then from $\O$ to $\B$ (see figure 1$b$). The direct arc is the upper arc, which exists if and only if $\eta\in\,]-\infty,1[$. We express the elapsed time along the direct symmetric arc by using the expression $C=r^2\dot\theta$ of the angular momentum:
\begin{equation}
T_D^S(\eta)=\int_{t_\A}^{t_\B} \d t=\int_{\theta_\A}^{\theta_\B}\frac{r^2}{C}\d\theta=\int_{\theta_\A}^{\theta_\B}\frac{r_\A^{3/2}(1-\eta \sin\theta_\A)^{3/2}}{(1-\eta\sin\theta)^2}\d\theta.
\end{equation}
We estimate the derivative $\d T^S_D/\d \eta$ by differentiating under the integration symbol. We introduce the shorter notation $p=\sin\theta_\A$, $q=\sin\theta$ and compute
$$\frac{\d}{\d \eta}\Bigl((1-\eta p)^{3/2}(1-\eta q)^{-2}\Bigr)=(1-\eta p)^{1/2}(1-\eta q)^{-3}K$$
where $$K=-\frac{3}{2}p(1-q\eta)+2q(1-p\eta)=\frac{1}{2}p(1-q\eta)+2(q-p).$$
On the considered arc, $0<p\leq q\leq 1$. Thus $K>0$ and consequently $\d T^S_D/\d \eta>0$.
\end{proof}

\section{Uniqueness of the direct rectilinear arc}
\label{UDRA}

We place the end-points $\A$ and $\B$ on the $\O x$ axis in this order: $0< x_\B<x_\A$. As already said, the direct arcs in the rectilinear case are simply the arcs without collision at $\O$. Some of these arcs are {\it culminating}, i.e., the body reaches somewhere a maximum distance from $\O$ with a zero velocity.  For a direct arc from $\A$ to $\B$, the velocity at $\A$, denoted by $v_\A$, is nonnegative if and only if the arc has a culmination point. There is an upper bound for $v_\A$, the {\it escape velocity} $v_E=\sqrt{2/x_\A}$.

\begin{proposition}\label{TDR} On the line $\O x$, the direct Keplerian arcs around $\O$, with ends $\A$ and $\B$ satisfying $0<x_\B <x_\A$, are parametrized by the initial velocity $v_\A\in\,]-\infty, v_E[$. The elapsed time $T_D^R=t_\B-t_\A$ is an increasing function of $v_\A$, with nonzero derivative. We have: $T_D^R\to 0$ when $v_\A\to -\infty$ and $T_D^R\to +\infty$ when $v_\A\to v_E$.
\end{proposition}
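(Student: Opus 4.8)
The plan is to work directly with the rectilinear Newton equation $\ddot x=-1/x^2$ on the half-line $\{x>0\}$, rather than with any closed formula for the Keplerian conic, so that elliptic, parabolic and hyperbolic arcs are handled at once. First I would fix $\A$ at $x_\A$ and $\B$ at $x_\B$ with $0<x_\B<x_\A$, and for each real $v_\A$ let $x(\,\cdot\,;v_\A)$ be the solution with $x(0)=x_\A$, $\dot x(0)=v_\A$. Using conservation of energy $\tfrac12\dot x^2-1/x=E$ with $E=\tfrac12 v_\A^2-1/x_\A$, and the fact that $\ddot x<0$ makes $\dot x$ strictly decreasing, I would check: if $v_\A\le 0$ the body falls monotonically and meets $x_\B$ at a first time $T(v_\A)>0$ before any collision; if $0<v_\A<v_E$ (equivalently $E<0$) it rises to the apoapsis $x_{\max}=-1/E<+\infty$, then falls and meets $x_\B$ at a first time $T(v_\A)>0$; and if $v_\A\ge v_E$ the orbit is unbounded and increasing, hence never returns to $x_\B<x_\A$. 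So the direct arcs from $\A$ to $\B$ are exactly the restrictions of these solutions to $[0,T(v_\A)]$ with $v_\A\in\,]-\infty,v_E[$ (translating time so $t_\A=0$, the parameter $v_\A=\dot x(0)$ determines the arc, and $T(v_\A)$ is the only instant at which the solution is at $x_\B$ without having collided). Along each such arc $x\ge x_\B>0$, and the arrival velocity $v_\B=\dot x(T(v_\A);v_\A)$ is strictly negative, since the body reaches $\B$ moving inward; in particular $v_\B\neq 0$.

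The core of the proof is the monotonicity, and here I would avoid differentiating under an integral sign and use the variational equation instead. Since the vector field is smooth on $\{x>0\}$ and each arc stays there, $(t,v_\A)\mapsto x(t;v_\A)$ is smooth, and since the crossing of the level $\{x=x_\B\}$ is transversal ($v_\B\neq 0$), the implicit function theorem makes $v_\A\mapsto T(v_\A)$ smooth with
\[
T'(v_\A)=-\frac{w(T(v_\A);v_\A)}{v_\B},\qquad w:=\frac{\partial x}{\partial v_\A},
\]
where $w$ solves the variational equation $\ddot w=(2/x^3)\,w$, $w(0)=0$, $\dot w(0)=1$. Because $2/x^3>0$ along the arc, a one-line bootstrap (if $w>0$ on an initial subinterval then $\ddot w>0$ there, so $\dot w$ is increasing, so $\dot w\ge 1$ and $w>0$ persist) gives $w>0$ and $\dot w>0$ on $]0,T(v_\A)]$; indeed $w(T(v_\A))>T(v_\A)>0$. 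Combined with $v_\B<0$, this yields $T'(v_\A)>0$, a finite nonzero number, uniformly over all three conic types. (Equivalently this is a comparison principle: two rectilinear solutions with the same initial position and ordered initial velocities stay ordered, hence reach $x_\B$ in ordered times.)

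For the limits I would use crude speed estimates. As $v_\A\to-\infty$, along the whole infalling arc $\dot x^2=v_\A^2+2/x-2/x_\A\ge v_\A^2$, so $T(v_\A)=\int_{x_\B}^{x_\A}\d x/|\dot x|\le (x_\A-x_\B)/|v_\A|\to 0$. As $v_\A\to v_E^-$ one has $E\to 0^-$ and $x_{\max}=-1/E\to+\infty$; the arc sweeps the interval $[x_\A,R]$ with $R:=1/(2|E|)$, which lies strictly between $x_\A$ and $x_{\max}$ once $v_\A$ is close enough to $v_E$, and on it $|\dot x|=\sqrt{2/x-2|E|}<\sqrt{2/x_\A}$, so $T(v_\A)\ge (R-x_\A)\sqrt{x_\A/2}\to+\infty$.

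The only genuine bookkeeping is the first step: making ``the direct arc attached to $v_\A$'' unambiguous (existence for $v_\A<v_E$, non-existence for $v_\A\ge v_E$, uniqueness) and recording $v_\B<0$; after that the monotonicity is immediate from the positivity of the variational solution. I expect the tempting alternative would be to write $T_D^R$ as an explicit integral in $x$ — two pieces in the culminating case, one in the infalling case — and differentiate, as in the proof of the symmetric case above; but that forces a split between bounded and unbounded orbits and messier estimates, which the variational argument sidesteps, in keeping with the stated goal of not distinguishing the three types of conic.
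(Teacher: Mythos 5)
Your proof is correct, and its central step is genuinely different from the paper's. The paper also avoids any case split between conic types, but it does so by a different trick: since $v=\dot x$ is strictly monotone along a direct rectilinear arc, it takes $v$ (shifted, $u=v-v_\A$) as the integration variable, writes $T_D^R(v_\A)=-\int_0^{v_\B-v_\A}x(v_\A+u)^2\,\d u$, and differentiates under the integral using energy conservation, obtaining the explicit formula $\d T_D^R/\d v_\A=-x_\B^2\bigl(v_\A/v_\B-1\bigr)+2\int_0^{v_\B-v_\A}x^3u\,\d u$ whose two terms are visibly positive. You instead get monotonicity from the variational equation $\ddot w=(2/x^3)w$, $w(0)=0$, $\dot w(0)=1$, a positivity bootstrap, and the transversality $v_\B<0$ via the implicit function theorem; this is a clean comparison-principle argument, and you are more explicit than the paper about the parametrization claim (existence/uniqueness of the arc for $v_\A<v_E$, non-existence for $v_\A\ge v_E$) and about the two limit statements, which the paper leaves implicit. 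What the paper's route buys, and yours does not directly, is the explicit first-derivative formula that is differentiated once more to prove convexity (proposition \ref{CTDR}) and to obtain lemma \ref{infini}; with your approach one would have to analyze the second variation, which is considerably less transparent. Note also that your closing remark about the ``tempting alternative'' (an integral in $x$, split into two pieces in the culminating case) does not describe the paper's actual proof: integrating in $v$ rather than $x$ is precisely how the paper sidesteps that split.
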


\begin{corollary}\label{UR} In the Euclidean plane or space, consider three distinct points $\O$,  $\A$, $\B$ such that $\A$ and $\B$ are on a same ray from $\O$. There is a unique direct Keplerian arc around $\O$ going from $\A$ to $\B$ in a given positive elapsed time. This arc is rectilinear. \end{corollary}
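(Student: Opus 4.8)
The plan is to obtain Corollary~\ref{UR} as an essentially immediate consequence of Proposition~\ref{TDR}, once two preliminary points are settled: that \emph{every} Keplerian arc from $\A$ to $\B$ is necessarily rectilinear, and that the normalization $0<x_\B<x_\A$ used in that proposition costs nothing.

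First I would justify the rectilinearity. A nondegenerate Keplerian conic carries $\O$ at one of its foci, and a focal chord of a conic has the focus strictly between its two endpoints; hence a ray issued from $\O$ meets such a conic in at most one point, so no Keplerian arc lying on it — not even a multi-revolution one — can pass through two distinct points of a single ray. This is the observation recorded in Section~2 just after the definition of rectilinear arc. Consequently, when the distinct points $\A$ and $\B$ lie on one ray from $\O$, every Keplerian arc from $\A$ to $\B$, in the plane \emph{or} in space, is rectilinear; in particular it lies on the line $\O\A\B$, and since rotating that line about itself produces nothing new, the plane and space counts agree. It also follows that the only direct arcs from $\A$ to $\B$ are precisely those described by Proposition~\ref{TDR} (after normalization), so that that proposition really enumerates all of them.

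Next comes the normalization. Put $\O$, $\A$, $\B$ on the axis $\O x$; as they are distinct, one of $\A,\B$ is strictly farther from $\O$. Applying, if needed, the time reversal $t\mapsto -t$ — which sends Keplerian arcs to Keplerian arcs, leaves the elapsed time and the convex hull unchanged (hence preserves the property of being direct), and exchanges the two endpoints — we may assume $0<x_\B<x_\A$. Then Proposition~\ref{TDR} tells us that the direct arcs from $\A$ to $\B$ are parametrized by $v_\A\in\,]-\infty,v_E[$, with $T_D^R$ a continuous, strictly increasing function of $v_\A$ whose range is $]0,+\infty[$. By the intermediate value theorem together with the strict monotonicity, each prescribed elapsed time $T>0$ is realized for exactly one value of $v_\A$, hence by exactly one direct arc; and that arc is rectilinear by the first step, so it lies on the line $\O\A\B$.

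I do not expect a genuine obstacle here: all the analytic work sits inside Proposition~\ref{TDR}, which may be assumed. Within this argument the only steps that want a little care are phrasing the rectilinearity reduction so that no nonrectilinear or multi-revolution arc slips through, and recording that ``direct'' is invariant under time reversal because it is a property of the arc as a point set.
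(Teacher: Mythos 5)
Your proposal is correct and follows essentially the same route as the paper, which obtains this corollary from the remark of Section 2 that all arcs joining two points of a common ray are rectilinear together with the monotonicity and limit behaviour of $T_D^R$ in Proposition \ref{TDR}; your extra care about the normalization $0<x_\B<x_\A$ via time reversal and about invariance of ``direct'' is a welcome detail the paper leaves implicit. One small imprecision: the claim that a ray from $\O$ meets ``the conic'' at most once, deduced from the focal-chord remark, fails for a full hyperbola (the ray from the focus through the centre meets both vertices); it is true, however, for the single branch actually traversed by the Keplerian body, e.g.\ because the polar equation $r=C^2/(1-\eta\sin\theta)$ assigns at most one radius to each direction, and that is all your argument needs.
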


\begin{corollary} If two test particles attracted by a Newtonian fixed center start from the same point with two radial velocities, which we assume to be distinct, and if their motion is not extended after a collision with the center, they will not meet again.
\end{corollary}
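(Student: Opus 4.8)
The plan is to reduce everything to a one‑dimensional comparison argument. Translate time so that both bodies are at $\A$ at $t=0$; since their velocities are radial, each body has vanishing angular momentum and moves along the ray issued from $\O$ through $\A$. Because a motion is not continued past a collision with $\O$, on its interval of existence the $i$‑th body stays at a strictly positive distance $x_i(t)$ from $\O$, and there the Newton equation reads $\ddot x_i=-1/x_i^2$. Saying that the bodies ``meet again'' means that $x_1(t)=x_2(t)$ for some $t>0$ lying in both intervals of existence.

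Next I would look at the difference $d=x_1-x_2$. A short computation gives
\[
\ddot d=\frac1{x_2^2}-\frac1{x_1^2}=\frac{x_1+x_2}{x_1^2x_2^2}\,d=\mu(t)\,d,\qquad \mu(t)=\frac{x_1+x_2}{x_1^2x_2^2}>0 .
\]
Thus $d$ solves a linear second‑order equation $\ddot d=\mu d$ with a positive continuous coefficient, and such an equation is disconjugate: where $d>0$ one has $\ddot d>0$, so $d$ is convex there, and similarly $d$ is concave where it is negative; a solution with two zeros would therefore have to vanish identically between them, hence everywhere by uniqueness. Here $d(0)=0$ while $\dot d(0)=v_1-v_2\neq0$, so $d$ is not identically zero; consequently $d(t)\neq0$ for every $t\neq0$ in the common interval of existence, which is exactly the assertion that the bodies never meet again.

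The only case that needs a separate word is the degenerate possibility of a simultaneous collision at $\O$, i.e.\ both intervals of existence being $[0,T)$ with $x_i(t)\to0$ as $t\to T^-$. This is excluded by the same relation: normalising so that $\dot d(0)>0$, the identity $\ddot d=\mu d$ keeps $d>0$ and $\dot d$ nondecreasing on $(0,T)$, whence $d(t)\ge \dot d(0)\,t$, contradicting $d(t)=x_1(t)-x_2(t)\to0$. (Equivalently, one may note that the fall time to $\O$ is a strictly increasing function of $v_\A$, in the spirit of Proposition~\ref{TDR}.)

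I do not expect a serious obstacle here: once the comparison equation $\ddot d=\mu d$ with $\mu>0$ is in hand, disconjugacy does all the work, with no need to distinguish the elliptic, parabolic and hyperbolic cases; the only mild care required is the bookkeeping of the intervals of existence and the collision case just mentioned. If one preferred to deduce the statement from the results already proved, a common position $\B\notin\{\O,\A\}$ attained at a time $T$ would produce two distinct direct rectilinear arcs from $\A$ to $\B$ with the same elapsed time $T$, contradicting Corollary~\ref{UR}; but that route still requires separate (easy) treatments of $\B=\A$ and $\B=\O$, so the differential‑inequality argument above seems cleaner.
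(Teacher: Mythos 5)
Your argument is correct, but it is not the route the paper takes. In the paper this corollary is presented as an immediate consequence of Proposition~\ref{TDR} (equivalently Corollary~\ref{UR}): the direct rectilinear arcs from $\A$ to $\B$ are parametrized injectively by $v_\A$ and the elapsed time $T_D^R$ is strictly increasing in $v_\A$ (proved by the integral computation in the variable $v=\dot x$), so a second meeting would produce two distinct direct arcs with the same endpoints and the same elapsed time. You instead work directly with the difference $d=x_1-x_2$ of the two radial motions and observe that it satisfies $\ddot d=\mu(t)\,d$ with $\mu>0$, so that disconjugacy forbids a second zero; this is a genuinely different and more elementary mechanism, which never mentions the elapsed-time function at all. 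It has two advantages: it treats uniformly the cases where the hypothetical meeting point lies inside, at, or beyond $\A$ (Proposition~\ref{TDR} is stated only for $0<x_\B<x_\A$, so the deduction in the paper silently leaves the configurations $x_\B\geq x_\A$ to the reader, e.g.\ two outgoing particles re-meeting at $\A$ after culmination), and it disposes cleanly of the simultaneous-collision degeneracy. Its drawback is that it proves only this corollary, whereas the paper's computation of $\d T_D^R/\d v_\A>0$ is the workhorse for everything that follows (uniqueness, convexity in Proposition~\ref{CTDR}, the multi-revolution counts). Your closing remark correctly identifies the paper's intended derivation via Corollary~\ref{UR} as the alternative; both routes are sound.
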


\begin{proof} The velocity $v=\dot x$ is decreasing along the arc. We use it as a parameter:
\begin{equation}\label{velo}
T_D^R(v_\A)=\int_{t_\A}^{t_\B}\d t=\int_{v_\A}^{v_\B}\frac{\d t}{\d v}\d v=-\int_{v_\A}^{v_\B}\bigl(x(v)\bigr)^2\d v,
\end{equation}
according to the equation of motion $\ddot x=-1/x^2$. We introduce the variable $u=v-v_\A$.
\begin{equation}\label{velo1}
T_D^R(v_\A)=-\int_{0}^{v_\B-v_\A}\bigl(x(v_\A+u)\bigr)^2\d u.
\end{equation}
The conservation of energy $v_\A^2/2-1/x_\A=v_\B^2/2-1/x_\B$ gives
$v_\A\d v_\A=v_\B \d v_\B$. Similarly, $v_\A^2/2-1/x_\A=(v_\A+u)^2/2-1/x$
gives $-1/x_\A=v_\A u+u^2/2-1/x$ and, if $u$ is fixed, $u\d v_\A=-\d x/x^2$. So,
$$\frac{\d T_D^R}{\d v_\A}=-x_\B^2\frac{\d (v_\B-v_\A)}{\d v_\A}-2\int_{0}^{v_\B-v_\A}x\frac{\d x}{\d v_\A} \d u=-x_\B^2\Bigl(\frac{v_\A}{v_\B}-1\Bigr)+2\int_{0}^{v_\B-v_\A}x^3 u \d u.$$
Both terms are positive since $v_\A-v_\B>0$, $v_\B<0$ and $u\leq0$.
\end{proof}

\section{A first consequence of Lambert's theorem}

The proofs of propositions \ref{TDS} and \ref{TDR} are redundant. According to the following consequence of Lambert's theorem, these propositions can be deduced one from each other.

\begin{proposition}\label{lam1}
If the isosceles triangle $\A \O \B$ in proposition \ref{TDS} and the flat triangle $\O\B \A $ in proposition \ref{TDR} have same $\|\A \B\|$ and same $\|\O \A\|+\|\O \B\|$, then there is an invertible change of variable $\eta \leftrightarrow v_\A$ such that $T_D^S(\eta)=T_D^R(v_\A)$ and $H^S(\eta)=H^R(v_\A)$, where $T_D^S$, $H^S$, $T_D^R$, $H^R$ are respectively the elapsed time and the energy on the direct symmetric arc and on the direct rectilinear arc.
\end{proposition}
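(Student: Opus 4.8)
The plan is to invoke Lambert's theorem in the form that says the elapsed time on a Keplerian arc depends only on three quantities: the semi-major axis $a$ (equivalently the energy $h=-1/(2a)$), the chord length $c=\|\A\B\|$, and the sum of the focal radii $s=\|\O\A\|+\|\O\B\|$. Since both families in question share the same $c$ and the same $s$ by hypothesis, each symmetric arc and each rectilinear arc is determined, as far as the time-of-flight is concerned, by its energy alone, provided we stay within the ``direct'' branch where Lambert's formula takes one consistent sign. So the strategy is: (i) show that along the direct symmetric family the energy $H^S$ is a strictly monotone function of $\eta$, hence a bijection from $]-\infty,1[$ onto some interval $I^S$ of energies; (ii) show that along the direct rectilinear family the energy $H^R$ is a strictly monotone function of $v_\A$, hence a bijection from $]-\infty,v_E[$ onto some interval $I^R$; (iii) check $I^S=I^R$; (iv) define the change of variable $\eta\leftrightarrow v_\A$ as the composite $(H^R)^{-1}\circ H^S$, and conclude $H^S(\eta)=H^R(v_\A)$ by construction and $T_D^S(\eta)=T_D^R(v_\A)$ by Lambert's theorem, since the two arcs then share $a$, $c$, $s$, and the direct character.

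For step (i), the energy of the symmetric arc is read off from the conic $r=C^2/(1-\eta\sin\theta)$: one has $h = (\eta^2-1)/(2C^2)$ with $C^2 = r_\A(1-\eta\sin\theta_\A) = r_\A(1-\eta p)$, so $H^S(\eta) = (\eta^2-1)/(2r_\A(1-\eta p))$. A direct differentiation shows its derivative has constant sign for $\eta\in\,]-\infty,1[$ (the numerator of $dH^S/d\eta$ factors into something manifestly positive on this range, using $0<p<1$), giving strict monotonicity; the endpoint values are $H^S\to+\infty$ as $\eta\to-\infty$ and $H^S\to 0$ as $\eta\to 1$ (the parabolic limit), so $I^S=\,]0,+\infty[$... — wait, I should be careful: hyperbolic arcs have $h>0$ and the segment limit $\eta\to-\infty$ is the infinite-energy limit, while $\eta=0$ is the circle through $\A,\B$ with $h<0$, so in fact $H^S$ ranges over all of $\R$ (from $+\infty$ down through negative values as $\eta\to1^-$ only the parabola $h=0$ is the boundary — let me restate: $H^S(\eta)\to+\infty$ as $\eta\to-\infty$, $H^S$ decreases, $H^S(0)=-1/(2r_\A)<0$, and $H^S\to 0^-$ as $\eta\to1^-$, so $I^S=\,]0,+\infty[\,\cup\,[-1/(2r_\A),0[$ — the point is simply that $H^S$ is a decreasing bijection onto an interval, which is all we need). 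For step (ii), energy conservation gives $H^R(v_\A)=v_\A^2/2 - 1/x_\A$, which is visibly strictly increasing for $v_\A<0$ and strictly increasing overall on $]-\infty,v_E[$... no: it decreases on $]-\infty,0[$ and increases on $]0,v_E[$, so it is NOT globally monotone. This is the subtlety.

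The main obstacle, then, is exactly this non-monotonicity of the naive energy parametrizations: $\eta\mapsto H^S$ and $v_\A\mapsto H^R$ need not each be injective on their full parameter intervals, so one cannot simply invert. The resolution I would pursue is to parametrize instead by a quantity that IS monotone along each family and relate the two via Lambert — for the rectilinear family, the natural choice is the maximal distance (culmination radius) for $v_\A\ge 0$ together with a branch label, or better, to use the signed eccentricity on both sides: a rectilinear arc is a degenerate conic of eccentricity $1$ reached as a limit, but one can still assign it the limiting $\eta$ from the family of conics through $\A$ and $\B$ with the same $c$ and $s$. Concretely, I would fix $c$ and $s$ and consider the one-parameter family of all conics (of every eccentricity) with a focus at $\O$, chord $c$, and focal-radius-sum $s$; Lambert's geometry (the classical construction with the empty focus) shows this family is parametrized by $a$, and the symmetric arcs and rectilinear arcs are two sections of it. The honest version of the proof is: use Lambert's theorem to get $T$ as a function $\mathcal T(a)$ of $a$ alone (given $c,s$), check that $a\mapsto\eta$ and $a\mapsto v_\A$ are each the inverse of a strictly monotone map on the relevant range (monotonicity in $a$ of each, which follows from Propositions \ref{TDS} and \ref{TDR} combined with $T$ being monotone in each parameter), and take the change of variable to be $v_\A = (\text{that second map})\circ(\text{first map})^{-1}(\eta)$. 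I expect the verification that the energy ranges match — i.e. that every $a$ arising from a direct symmetric arc also arises from a direct rectilinear arc with the same $c,s$, and conversely — to require a short case analysis separating $h>0$, $h=0$, $h<0$, which is where one must be most careful.
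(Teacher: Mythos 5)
There is a genuine gap, and it sits exactly where you placed your confidence. First, a computational error: $H^S(\eta)=(\eta^2-1)/\bigl(2r_\A(1-\eta\sin\theta_\A)\bigr)$ is \emph{not} monotone on $]-\infty,1[$; its derivative vanishes at $\eta=\tan(\theta_\A/2)$ (the minimal-energy ellipse, which is precisely the value corresponding to $v_\A=0$), so $H^S$ decreases and then increases, mirroring the non-monotonicity of $H^R(v_\A)=v_\A^2/2-1/x_\A$ that you did notice. Hence neither side can be inverted through the energy, and your step (i) fails just as step (ii) does. Second, and more seriously, your proposed repair --- ``Lambert's theorem gives $T$ as a function $\mathcal T(a)$ of $a$ alone, given $c$ and $s$, on the direct branch'' --- is exactly the ambiguous form of Lambert's theorem that the paper's proof explicitly warns against: in the elliptic case there are \emph{two} distinct direct arcs with the same $a$, $c$, $s$, distinguished by whether their convex hull contains the second focus, and they have different elapsed times. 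So $\mathcal T(a)$ is not well defined, and the ``short case analysis'' you defer is in fact the whole content of the proposition: one must match the two branches on each side (elliptic arcs direct with respect to the second focus $\leftrightarrow$ $v_\A<0$, indirect with respect to the second focus $\leftrightarrow$ $v_\A\in[0,v_E[$) before the refined Lambert theorem can be applied.

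The paper's proof handles both difficulties at once by making the change of variable explicit: it imposes $H^S(\eta)=H^R(v_\A)$, observes that the resulting expression for $\tfrac12 v_\A^2$ is a perfect square, namely $(\eta\sqrt{x_\A}-\sqrt{x_\B})^2/\bigl(x_\A(x_\A+x_\B-2\eta\sqrt{x_\A x_\B})\bigr)$, and then chooses the sign of the square root so that the types of the two arcs (hyperbolic/parabolic/elliptic, and in the elliptic case the type relative to the second focus) coincide; this yields formula (\ref{v}), a globally monotone, hence invertible, map $\eta\mapsto v_\A$ from $]-\infty,1[$ onto $]-\infty,v_E[$, after which the precise (type-matched) form of Lambert's theorem gives $T_D^S(\eta)=T_D^R(v_\A)$. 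Your outline, as written, never introduces the second-focus distinction, so the bijection it tries to build via $(H^R)^{-1}\circ H^S$ or via $a$ cannot be constructed; to complete it you would have to reinstate exactly this branch-matching, i.e.\ reproduce the sign discussion that constitutes the paper's argument.
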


{\bf Explicit formulas.} We will get $T_D^S(\eta)=T_D^R(v_\A)$ and $H^S(\eta)=H^R(v_\A)$ with 
\begin{equation}\label{v}
v_\A(\eta)=\frac{\eta-\sqrt{x_\B/x_\A}}{\sqrt{{(x_\A+x_\B)/2}-\eta \sqrt{x_\A x_\B}}}.
\end{equation}
Here $x_\A$ and $x_\B$ are the positions of $\A$ and $\B$ in the rectilinear case (proposition \ref{TDR}). In the symmetric case (proposition \ref{TDS}), we used the polar coordinates $(r_\A,\theta_\A)$ of the initial point $\A$. We have
\begin{equation}\label{S}
\sqrt{\frac{x_\B}{x_\A}}=\tan\frac{\theta_\A}{2},\qquad \frac{x_\A+x_\B}{2}=r_\A,\qquad \sqrt{x_\A x_\B}=r_\A\sin \theta_\A,
\end{equation}
since by hypothesis $x_\A+x_\B=\|\O\A\|+\|\O\B\|=2r_\A$ and $x_\A-x_\B=\|\A\B\|=2r_\A\cos\theta_\A$. The expressions of the energy in both cases are respectively
\begin{equation}\label{H}
H^S(\eta)=\frac{\eta^2-1}{2C^2}=\frac{\eta^2-1}{2r_\A(1-\eta\sin\theta_\A)},
\qquad H^R(v_\A)=\frac{1}{2}v_\A^2-\frac{1}{x_\A}.
\end{equation}

\begin{proof} Lambert's theorem establishes the equality of the elapsed times for arcs with same $\|\A\B\|$, same $\|\O\A\|+\|\O\B\|$ and same energy. Actually, this statement is ambiguous: two arcs may have same energy and different elapsed times if they have different types. One should also specify that the corresponding arcs have same type, direct or indirect, simple or $n$-revolution, and also, in the elliptic case, same type with respect to the second focus (indirect if the convex hull of the arc contains the second focus, direct if it does not, see \cite{alb}, sect.\ 7). After using (\ref{S}) to substitute in (\ref{H}) we write
$$\frac{\eta^2-1}{x_\A+x_\B-2\eta \sqrt{x_\A x_\B}}=\frac{1}{2}v_\A^2-\frac{1}{x_\A}.$$
Thus $$\frac{1}{2}v_\A^2=\frac{x_\A(\eta^2-1)+x_\A+x_\B-2\eta\sqrt{x_\A x_\B}}{x_\A(x_\A+x_\B-2\eta \sqrt{x_\A x_\B})}=\frac{(\eta\sqrt{x_\A}-\sqrt{x_\B})^2}{x_\A(x_\A+x_\B-2\eta \sqrt{x_\A x_\B})}$$
This gives (\ref{v}) after a choice of sign which is such that the types of arcs coincide. The direct hyperbolic arcs correspond to $v_\A<-v_E$ and $\eta<-1$, the direct parabolic arc correspond to $v_\A=-v_E$ and $\eta=-1$, the direct elliptic arcs, direct with respect to the second focus, to $\eta\in\,]-1,\tan(\theta_\A/2)[$ and $v_\A\in\,]-v_E,0[$ and finally the direct elliptic arcs, indirect with respect to the second focus, to $\eta\in [\tan(\theta_\A/2),1[$ and to $v_\A\in [0,v_E[$.
\end{proof}

\section{Convexity results for the direct arc}
\label{convd}

\begin{proposition}\label{CTDR} The function $T^R_D(v_\A)$ of proposition \ref{TDR} is a convex function of $v_\A$ with non-zero second derivative.
\end{proposition}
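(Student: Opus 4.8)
The plan is to obtain the convexity by differentiating once more the formula for $\d T_D^R/\d v_\A$ established in the proof of Proposition~\ref{TDR}, using exactly the three tools that already appeared there: the Leibniz rule for an integral with a variable endpoint; the identity $v_\A\,\d v_\A=v_\B\,\d v_\B$ furnished by conservation of energy, i.e.\ $\d v_\B/\d v_\A=v_\A/v_\B$; and the relation $\partial x/\partial v_\A=-u\,x^2$ obtained by differentiating $1/x=v_\A u+u^2/2+1/x_\A$ with $u$ held fixed.

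Concretely, I would start from
$$\frac{\d T_D^R}{\d v_\A}=-x_\B^2\Bigl(\frac{v_\A}{v_\B}-1\Bigr)+2\int_{0}^{v_\B-v_\A}x^3u\,\d u$$
and differentiate term by term. Since $\d(v_\A/v_\B)/\d v_\A=(v_\B^2-v_\A^2)/v_\B^3$, the first term contributes $-x_\B^2(v_\B^2-v_\A^2)/v_\B^3$. For the integral, the Leibniz rule gives a boundary contribution $2x_\B^3(v_\B-v_\A)\bigl(v_\A/v_\B-1\bigr)$ (the integrand at $u=v_\B-v_\A$ involves $x(v_\B)=x_\B$), and, using $\partial x/\partial v_\A=-ux^2$, an interior contribution $-6\int_0^{v_\B-v_\A}x^4u^2\,\d u$. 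Altogether
$$\frac{\d^2 T_D^R}{\d v_\A^2}=-x_\B^2\,\frac{v_\B^2-v_\A^2}{v_\B^3}+2x_\B^3(v_\B-v_\A)\Bigl(\frac{v_\A}{v_\B}-1\Bigr)-6\int_{0}^{v_\B-v_\A}x^4u^2\,\d u.$$

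It then suffices to observe that all three summands are strictly positive throughout $]-\infty,v_E[$, using the sign facts already recorded for the direct rectilinear arc: $x>0$ along the arc, the velocity is strictly decreasing so $v_\A>v_\B$, one has $v_\B<0$, hence $v_\A^2<v_\B^2$ by conservation of energy, and the integration variable $u$ ranges over $[v_\B-v_\A,0]$, so that the integral of a nonnegative function against $\d u$ from $0$ to $v_\B-v_\A$ is nonpositive (and in fact negative, since $x>0$). The first term is $-x_\B^2$ times a negative quantity; the second equals $-2x_\B^3(v_\A-v_\B)^2/v_\B>0$; the third is $(-6)$ times a negative number. Hence $\d^2 T_D^R/\d v_\A^2>0$, which gives convexity with nowhere vanishing second derivative. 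The computation is short; the only thing requiring care is the sign bookkeeping in the Leibniz step together with the fact that the interval $[0,v_\B-v_\A]$ of integration has a negative upper endpoint, so I do not anticipate a genuine obstacle. (Transporting convexity through the substitution of Proposition~\ref{lam1} would not work directly, since that change of variables need not preserve convexity.)
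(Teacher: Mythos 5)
Your proposal is correct and follows essentially the same route as the paper: differentiating once more the expression for $\d T_D^R/\d v_\A$ from the proof of Proposition~\ref{TDR} (Leibniz rule, $\d v_\B/\d v_\A=v_\A/v_\B$, $\partial x/\partial v_\A=-ux^2$) to get exactly the same three-term formula for $\d^2 T_D^R/\d v_\A^2$, and then checking that each term is positive using $v_\B<0$, $v_\B^2-v_\A^2=2/x_\B-2/x_\A>0$ and the negative upper endpoint of integration. The sign bookkeeping is accurate, so nothing is missing.
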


\begin{proof} We continue the computation proving proposition \ref{TDR}. $$\frac{\d^2 T_D^R}{\d v_\A^2}=-\frac{x_\B^2}{v_\B^3}\bigl(v_\B^2-v_\A^2\bigr)+2x_\B^3(v_\B-v_\A)\Bigl(\frac{v_\A-v_\B}{v_\B}\Bigr)-6\int_{0}^{v_\B-v_\A}x^4u^2\d u.$$
The three terms are positive: $v_\B^2-v_\A^2=2/x_\B-2/x_\A>0$, $v_\B<0$, etc.
\end{proof}

We will give two proofs of the following proposition, which is analogous to proposition \ref{CTDR}.

\begin{proposition}\label{CTDS} The function $T_D^S(\eta)$ of proposition \ref{TDS} is a convex function of $\eta$ with non-zero second derivative.
\end{proposition}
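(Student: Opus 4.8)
The plan is to imitate the proof of Proposition \ref{TDS}: differentiate
$$T_D^S(\eta)=\int_{\theta_\A}^{\theta_\B}\frac{r_\A^{3/2}(1-\eta p)^{3/2}}{(1-\eta q)^2}\,\d\theta,\qquad p=\sin\theta_\A,\ q=\sin\theta,$$
twice under the integral sign and show the resulting integrand is positive on the arc, where $0<p\le q\le 1$. We already know that the first derivative is
$$\frac{\d}{\d\eta}\Bigl((1-\eta p)^{3/2}(1-\eta q)^{-2}\Bigr)=(1-\eta p)^{1/2}(1-\eta q)^{-3}K,\qquad K=\tfrac12 p(1-\eta q)+2(q-p),$$
so differentiating once more I would write $\d^2/\d\eta^2=(1-\eta p)^{-1/2}(1-\eta q)^{-4}L$ for some polynomial $L$ in $\eta,p,q$, obtained by the product rule from the three factors $(1-\eta p)^{1/2}$, $(1-\eta q)^{-3}$ and $K$. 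Concretely, $L=(1-\eta p)(1-\eta q)\bigl(-\tfrac12 p\bigr)K+(1-\eta p)^2\cdot 3q\,K+(1-\eta p)^2(1-\eta q)\,\d K/\d\eta$, with $\d K/\d\eta=-\tfrac32 pq$. The claim is that $L>0$ for all $\eta\in\,]-\infty,1[$ and all $0<p\le q\le 1$.

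The heart of the matter is thus the sign of this polynomial $L$. My first move would be to group it so that the positivity is transparent. Note $1-\eta p\ge 1-\eta q>0$ on the arc (since $\eta<1\le 1/q$ forces $1-\eta q>0$, and $p\le q$), so every factor $(1-\eta p)$ and $(1-\eta q)$ appearing is strictly positive; it therefore suffices to show the bracketed combination of $K$-terms is positive. Substituting $K$ and collecting, I expect $L$ to be expressible as $(1-\eta p)$ times a sum of manifestly nonnegative pieces — typically a combination of $(q-p)\ge 0$, $p>0$, $(1-\eta q)>0$, and squares — perhaps after completing a square in $\eta$. A clean way to organize the bookkeeping is to introduce $s=1-\eta q>0$ and $w=q-p\ge 0$, so that $1-\eta p=s+\eta w$ and $K=\tfrac12 ps+2w$; then $L$ becomes a polynomial in $s,w,p,\eta$ with all the "dangerous" sign information concentrated in $\eta w$, and one checks the $\eta$-dependence is controlled because $w\ge 0$ and $\eta<1$.

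The main obstacle I anticipate is precisely this sign verification: unlike the first-derivative computation, where $K>0$ was immediate from $0<p\le q\le1$, the second derivative involves a competition between the convex growth of $(1-\eta q)^{-2}$ and the concave decay of $(1-\eta p)^{3/2}$, so the positivity of $L$ is genuine and will require an honest algebraic identity rather than term-by-term inspection. If a direct decomposition proves stubborn, the fallback — and, given the remark ``We will give two proofs of the following proposition,'' presumably the intended second route — is to invoke Proposition \ref{lam1}: the change of variable $\eta\leftrightarrow v_\A$ of equation (\ref{v}) identifies $T_D^S(\eta)=T_D^R(v_\A(\eta))$, and Proposition \ref{CTDR} already gives convexity of $T_D^R$ in $v_\A$. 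Convexity is not preserved under arbitrary reparametrization, so this route still requires showing that $\eta\mapsto v_\A(\eta)$ is, say, monotone with the right second-order behavior, or more robustly that $T_D^S$ inherits convexity through a convexity-in-$v_\A$ plus an estimate on $\d^2 v_\A/\d\eta^2$ and $\d v_\A/\d\eta$; alternatively one shows that both $T_D^R$ as a function of energy $H$ and $H$ as a function of either parameter behave well, using the energy identities (\ref{H}). I would present the direct differentiation as the primary proof and keep the Lambert reduction as the announced alternative.
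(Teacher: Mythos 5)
Your plan coincides with both of the paper's proofs (direct second differentiation under the integral sign, with the Lambert reduction via Propositions \ref{lam1} and \ref{CTDR} as the alternative), but as written neither route is actually carried out, and the one step that constitutes the whole content of the proposition is left as a claim. You reduce everything to ``$L>0$'' and then only describe how you would try to show it (grouping, completing a square, introducing $s$ and $w$), anticipating that it ``will require an honest algebraic identity rather than term-by-term inspection.'' That is precisely the gap: no such identity is produced. In fact the verification is simpler than you fear, and it is easiest if you do \emph{not} route the computation through $K$: differentiating $(1-\eta p)^{3/2}(1-\eta q)^{-2}$ twice directly gives three terms, and the two cross/last terms recombine using $q(1-\eta p)-p(1-\eta q)=q-p\ge 0$, yielding
\begin{equation*}
\frac{\d^2}{\d \eta^2}\Bigl((1-\eta p)^{3/2}(1-\eta q)^{-2}\Bigr)
=\frac{3}{4}\,p^2(1-\eta p)^{-1/2}(1-\eta q)^{-2}
+6\,q(q-p)(1-\eta p)^{1/2}(1-\eta q)^{-4}>0,
\end{equation*}
which is exactly the paper's Proof 1; positivity is then immediate from $0<p\le q\le 1$ and $1-\eta p\ge 1-\eta q>0$. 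Your fallback route also stops short at the decisive point: the inequality $\d^2T^S_D/\d\eta^2=\d^2T^R_D/\d v_\A^2\,(\d v_\A/\d\eta)^2+\d T^R_D/\d v_\A\cdot\d^2 v_\A/\d\eta^2>0$ needs convexity of $\eta\mapsto v_\A(\eta)$ from (\ref{v}) together with $\d T^R_D/\d v_\A>0$ from Proposition \ref{TDR}, and you acknowledge but do not establish the former, so the second route is likewise incomplete.

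Two smaller slips in your setup would also need repair if you insist on the $K$-based bookkeeping: with $K=\tfrac12 p(1-\eta q)+2(q-p)$ one has $\d K/\d\eta=-\tfrac12 pq$, not $-\tfrac32 pq$; and your displayed $L$ carries one extra factor of $(1-\eta p)$ in every term relative to the stated prefactor $(1-\eta p)^{-1/2}(1-\eta q)^{-4}$ (harmless for the sign, since $1-\eta p>0$, but a symptom of the computation not having been pushed through). With the direct grouping above, both issues disappear and the proof closes in two lines.
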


\begin{proof}[Proof 1] With the same notation as in the proof of proposition \ref{TDS}, $$\frac{\d^2}{\d \eta^2}\Bigl((1-\eta p)^{3/2}(1-\eta q)^{-2}\Bigr)=$$
$$\frac{3}{4}(1-\eta p)^{-1/2}(1-\eta q)^{-2}p^2+6(1-\eta p)^{1/2}(1-\eta q)^{-4}q(q-p)>0.$$
\end{proof}

\begin{proof}[Proof 2] We use the convexity in proposition \ref{CTDR} and the convexity of the change of variable (\ref{v}), obtained by a straightforward computation (see figure 2$a$):
$$\frac{\d^2T^S_D}{\d \eta^2}=\frac{\d^2T^R_D}{\d v_\A^2}\Bigl(\frac{\d v_\A}{\d \eta}\Bigr)^2+\frac{\d T^R_D}{\d v_\A}\frac{\d^2 v_\A}{\d \eta^2}>0.$$
\end{proof}

\section{The simple indirect rectilinear arc}

The indirect rectilinear arcs are the Keplerian arcs where the body collides with the center $\O$. The motion is extended after collision. The body ``bounces'' with infinite velocity, while keeping the same total energy. The simple arcs are those which do not return to the same point with the same velocity. We have exactly the same uniqueness statement as in the direct case. We recall that $v_E=\sqrt{2/x_\A}$.

\begin{proposition}\label{TIR} On the line $\O x$, the simple indirect Keplerian arcs around $\O$, with ends $\A$ and $\B$ satisfying $0<x_\B <x_\A$, are parametrized by the initial velocity $v_\A\in\,]-\infty, v_E[$. The elapsed time $T_I^R=t_\B-t_\A$ is an increasing function of $v_\A$, with nonzero derivative. We have: $T_I^R\to 0$ when $v_\A\to -\infty$ and $T_I^R\to +\infty$ when $v_\A\to v_E$.
\end{proposition}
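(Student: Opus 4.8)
The plan is to follow the scheme of the proof of Proposition~\ref{TDR}, now decomposing the arc into two pieces: a first branch from $\A$ to the collision at $\O$ (along which, when $v_\A>0$, the body first rises to a culmination and then falls in) and a second branch from $\O$ back out to $\B$. On each branch the velocity $v=\dot x$ is strictly decreasing, since $\ddot x=-1/x^2<0$, so it may serve as a coordinate, and conservation of energy $v^2/2-1/x=H$ gives $x=2/(v^2-2H)$ as a single-valued function of $v$, with $2H=v_\A^2-2/x_\A$. At $\O$ the velocity jumps from $-\infty$ to $+\infty$, and the body reaches $\B$ moving outward with $v_\B=\sqrt{v_\A^2-2/x_\A+2/x_\B}>0$. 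I would first record that this construction is available exactly for $v_\A<v_E$: if $v_\A\ge v_E$ the body escapes and never reaches $\O$, whereas for every $v_\A<v_E$ the orbit reaches $\O$ and, after the bounce, reaches $x_\B$ because $1/x_\B-1/x_\A>0$; and among the indirect arcs from $\A$ to $\B$ with this $v_\A$ the simple one is unique, being forced to stop at the first passage through $\B$ after the bounce. This yields the parametrization.

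Using $\d t/\d v=-x^2$, which is the equation of motion, the elapsed time is $T_I^R(v_\A)=\int_{-\infty}^{v_\A}x^2\,\d v+\int_{v_\B}^{+\infty}x^2\,\d v$, and the substitutions $v=v_\A-s$ in the first integral and $v=v_\B+s$ in the second turn it into
$$T_I^R(v_\A)=\int_0^{\infty}\frac{x_\A^2\,\d s}{(1-v_\A x_\A s+x_\A s^2/2)^2}+\int_0^{\infty}\frac{x_\B^2\,\d s}{(1+v_\B x_\B s+x_\B s^2/2)^2}.$$
The first denominator equals $\frac{x_\A}{2}\bigl((s-v_\A)^2-2H\bigr)$, which is positive for all $s>0$ when $v_\A<v_E$, so the integrands are smooth, decay like $s^{-4}$, and may be differentiated under the integral sign. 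Using $v_\A\,\d v_\A=v_\B\,\d v_\B$ one obtains $\d T_I^R/\d v_\A=J_\A-(v_\A/v_\B)\,J_\B$, where
$$J_\A=\int_0^{\infty}\frac{2x_\A^3 s\,\d s}{(1-v_\A x_\A s+x_\A s^2/2)^3},\qquad J_\B=\int_0^{\infty}\frac{2x_\B^3 s\,\d s}{(1+v_\B x_\B s+x_\B s^2/2)^3}$$
are both positive.

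The main obstacle is the sign of this derivative when $v_\A>0$: there, increasing $v_\A$ lengthens the first branch but shortens the second, so the two terms compete. For $v_\A\le 0$ nothing is needed, since $-(v_\A/v_\B)J_\B\ge 0$ and $J_\A>0$. For $v_\A>0$ one has $v_\B>v_\A>0$; rescaling by $\sigma=x_\A s$ in $J_\A$ and $\sigma=x_\B s$ in $J_\B$ yields
$$J_\A=\int_0^{\infty}\frac{2x_\A\sigma\,\d\sigma}{(1-v_\A\sigma+\sigma^2/(2x_\A))^3},\qquad J_\B=\int_0^{\infty}\frac{2x_\B\sigma\,\d\sigma}{(1+v_\B\sigma+\sigma^2/(2x_\B))^3},$$
and, since $x_\A>x_\B>0$ and $v_\A,v_\B>0$, for every $\sigma>0$ the integrand of $J_\A$ has the larger numerator and the smaller positive denominator; hence $J_\A\ge J_\B$, and therefore $\d T_I^R/\d v_\A\ge J_\A(1-v_\A/v_\B)>0$. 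The two limits then follow from the displayed formula: as $v_\A\to-\infty$ the rescaling $s=\sigma/(|v_\A|x_\A)$ bounds the first integral by $x_\A/|v_\A|$ and the second by $x_\B/v_\B$, so $T_I^R\to 0$; and as $v_\A\to v_E$ the first denominator degenerates to the perfect square $\frac{x_\A}{2}(s-v_E)^2$, making the first integral, hence $T_I^R$, tend to $+\infty$. One could alternatively argue from the identity $T_I^R(v_\A)=T_D^R(v_\A)+2g(H(v_\A))$, where $g(H)$ is the fall time from $\B$ to $\O$ at energy $H$, using Proposition~\ref{TDR}, but the two-branch computation above is more self-contained.
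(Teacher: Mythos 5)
Your argument is correct, but it takes a genuinely different route from the paper's. The paper splits into two cases: for $v_\A<0$ it writes the elapsed time as the position-integral (\ref{TI}) and gets monotonicity from $\d T_I^R/\d H<0$, $\d H/\d v_\A<0$; for $v_\A\in[0,v_E[$ it invokes the identity $T_I^R(v_\A)={\cal T}(v_\A)-T_D^R(-v_\A)$, the explicit period ${\cal T}=2\pi(-2H)^{-3/2}$ and proposition \ref{TDR}, so the culminating case is disposed of without any new computation. You instead keep a single formula valid on all of $]-\infty,v_E[$ by parametrizing each of the two branches ($\A\to\O$ and $\O\to\B$) by the velocity --- legitimate even through a culmination, since $\dot v=-1/x^2<0$ and energy makes $x$ a single-valued function of $v$ --- and then you handle the competition between the lengthening first branch and the shortening second branch when $v_\A>0$ by the pointwise comparison $J_\A\ge J_\B$ after rescaling, together with $0<v_\A/v_\B<1$; I checked the substitutions, the identity $\d v_\B/\d v_\A=v_\A/v_\B$, the positivity of the first denominator for all $v_\A<v_E$, and the comparison, and they are all sound. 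What each approach buys: the paper's proof is shorter, reusing proposition \ref{TDR} and periodicity (so it needs $H<0$ in the culminating case, which is automatic there); yours is self-contained, avoids the case split and any appeal to the period or to \ref{TDR}, treats all energies uniformly, and moreover proves the two boundary limits $T_I^R\to0$ and $T_I^R\to+\infty$ explicitly, which the paper's proof leaves tacit (your limit argument at $v_\A\to v_E$ is stated briefly but is immediately justified by monotone convergence, since the integrand increases with $v_\A$).
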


\begin{corollary}\label{UIR}
In the Euclidean plane or space, consider three distinct points $\O$,  $\A$, $\B$ such that $\A$ and $\B$ are on a same ray from $\O$. There is a unique simple indirect Keplerian arc around $\O$ going from $\A$ to $\B$ in a given positive elapsed time. This arc is rectilinear.
\end{corollary}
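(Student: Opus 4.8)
The plan is to split the arc at the instant it collides with $\O$, express its duration as a sum of two elementary ``fall--time'' integrals written in the radial variable, and then differentiate; the only real subtlety will be the bookkeeping around the collision.

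I would start with the parametrization. The orbit issued from $\A$ with initial velocity $v_\A$ has energy $H=v_\A^2/2-1/x_\A$, and it returns to $\O$ in finite time exactly when $v_\A<v_E=\sqrt{2/x_\A}$ (for $v_\A\ge v_E$ the body escapes along the ray and never comes back). When $v_\A<v_E$ the body reaches $\O$ — directly if $v_\A\le 0$, after first rising to the apocenter $-1/H$ if $0<v_\A<v_E$ — bounces, and on the way out crosses $x_\B\in\,]0,x_\A[$; stopping there produces one, and only one, simple indirect arc from $\A$ to $\B$ for each $v_\A\in\,]-\infty,v_E[$, so that Corollary~\ref{UIR} will follow once the proposition is established. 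To compute the duration I would use $\ddot x=-1/x^2$ together with $v^2/2-1/x=H$ to rewrite time integrals as integrals in $y=x$; the time to fall straight from distance $\rho$ to $\O$ along the orbit of energy $H$ is then
$$\Phi(\rho,H)=\int_0^\rho\frac{\d y}{\sqrt{2(H+1/y)}},$$
a convergent integral, smooth in $H$ for $H>-1/\rho$. Splitting $T_I^R$ at the collision, the outgoing leg $\O\to\B$ contributes $\Phi(x_\B,H)$ and the incoming leg $\A\to\O$ contributes $\Phi(x_\A,H)$ when $v_\A\le 0$, and $P(H)-\Phi(x_\A,H)$ when $v_\A\ge 0$ (one full oscillation, of period $P(H)$, minus the arc $\O\to\A$). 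Hence
$$T_I^R(v_\A)=\Phi(x_\A,H)+\Phi(x_\B,H)\ \ (v_\A\le 0),\qquad T_I^R(v_\A)=P(H)-\!\int_{x_\B}^{x_\A}\!\frac{\d y}{\sqrt{2(H+1/y)}}\ \ (v_\A\ge 0).$$
The two boundary behaviours drop out at once: $v_\A\to-\infty$ gives $H\to+\infty$ and both $\Phi$'s $\to 0$, while $v_\A\to v_E$ gives $H\to 0^-$, $P(H)\to+\infty$ and the remaining integral bounded.

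For monotonicity I would differentiate in $v_\A$, using $\d H/\d v_\A=v_\A$ and $\partial_H\bigl(2(H+1/y)\bigr)^{-1/2}=-\bigl(2(H+1/y)\bigr)^{-3/2}$. For $v_\A<0$ this gives
$$\frac{\d T_I^R}{\d v_\A}=-v_\A\!\left(\int_0^{x_\A}\!\frac{\d y}{\bigl(2(H+1/y)\bigr)^{3/2}}+\int_0^{x_\B}\!\frac{\d y}{\bigl(2(H+1/y)\bigr)^{3/2}}\right)>0.$$
For $v_\A>0$ the decisive point is to differentiate the \emph{grouped} expression $P(H)-\int_{x_\B}^{x_\A}(\cdots)$ rather than the two legs separately — the middle term $\Phi(x_\B,H)$ is not by itself monotone in $v_\A$ — which yields
$$\frac{\d T_I^R}{\d v_\A}=v_\A\!\left(P'(H)+\int_{x_\B}^{x_\A}\!\frac{\d y}{\bigl(2(H+1/y)\bigr)^{3/2}}\right)>0,$$
since $P'(H)>0$ (the period grows with the energy) and the integrand is positive on $[x_\B,x_\A]$. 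Finally, at $v_\A=0$ both one-sided expressions are of the indeterminate form $0\cdot\infty$ because $\Phi_H(x_\A,H)\to-\infty$ as $H\to-1/x_\A$; a short estimate of this blow-up shows both equal $x_\A^2$, so $T_I^R$ is $C^1$ at $0$ with derivative $x_\A^2>0$.

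The step I expect to cost the most effort is everything tied to the collision: writing the extended (bounce) solution so that the splitting $T_I^R=\tau_\A+\tau_\B$ is legitimate, checking convergence of $\Phi$ near $y=0$ and near $y=-1/H$, justifying differentiation under the integral sign for $v_\A\ne 0$, and reconciling the two one-sided derivatives at $v_\A=0$. Once these are in place the inequalities are immediate, the whole argument resting on the single fact that the integrand of $\Phi$ has an $H$-derivative of constant sign.
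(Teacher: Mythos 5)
Your proof is correct and follows essentially the same route as the paper: parametrize the simple indirect rectilinear arcs by $v_\A\in\,]-\infty,v_E[$, write the elapsed time as $\int_0^{x_\A}+\int_0^{x_\B}$ in the radial variable for the plunging case $v_\A\le 0$ and as (period) minus (direct non-culminating fall from $\A$ to $\B$) for the culminating case $v_\A\ge 0$, then differentiate with respect to $H$. The only differences are cosmetic: for the second term the paper invokes the already-established monotonicity of $T_D^R$ (proposition \ref{TDR}) rather than differentiating the radial integral directly, and it does not carry out your (correct, but not strictly needed for strict monotonicity) matching of the one-sided derivatives at $v_\A=0$.
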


\begin{figure}
\resizebox{5.5cm}{6cm}{\includegraphics{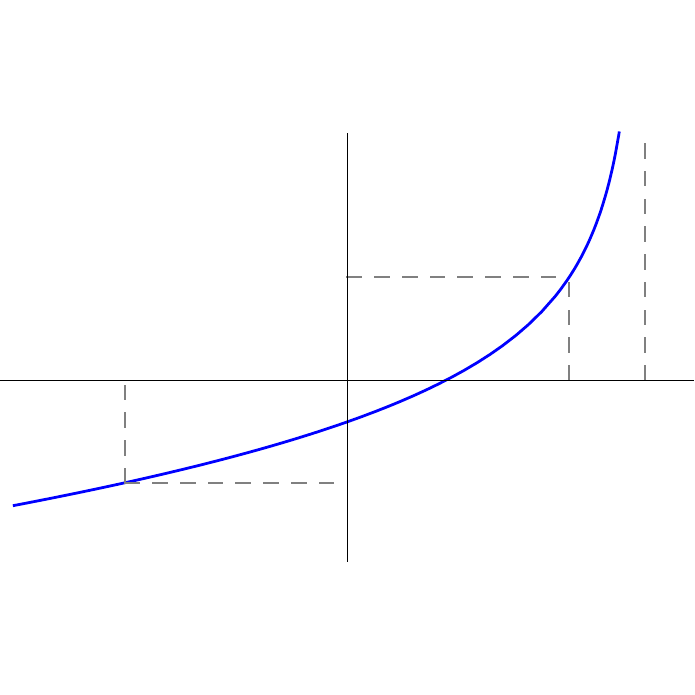}\put(-29,80){\scriptsize $\frac{x_\A+x_\B}{2\sqrt{x_\A x_\B}}$}\put(-38,80){\scriptsize $1$}\put(-82,77){\scriptsize $\sqrt{\frac{x_\B}{x_\A}}$}\put(-172,96){\scriptsize $-1$}\put(5,88){\scriptsize $\eta\rightarrow$}\put(-95,160){\scriptsize $v_\A\hspace{-0.6mm}\uparrow$}\put(-25,170){\footnotesize $v_\A=v_\A(\eta)$}\put(-112,120){\scriptsize $v_E$}\put(-97,59){\scriptsize $-v_E$}}
	\hspace{2cm}	\resizebox{5.5cm}{6cm}{\includegraphics{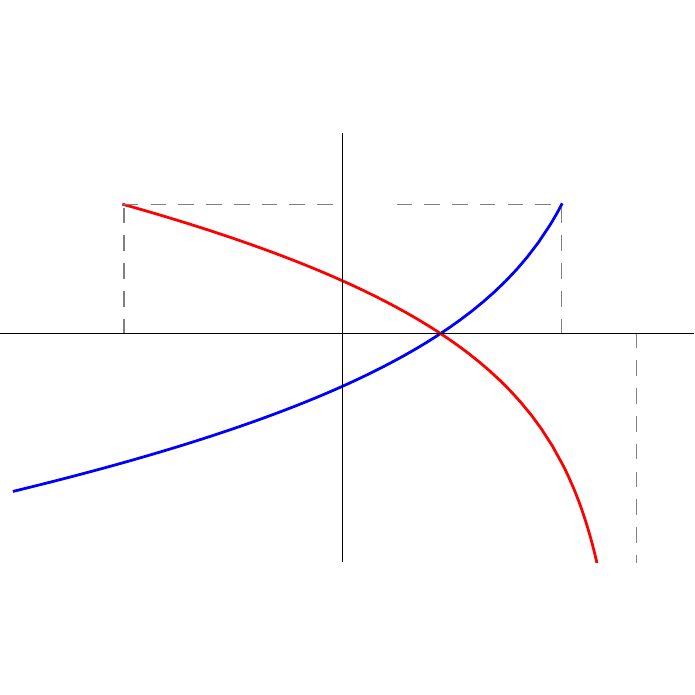}\put(-30,113){\scriptsize $\frac{x_\A+x_\B}{2\sqrt{x_\A x_\B}}$}\put(-40,96){\scriptsize $1$}\put(-175,96){\scriptsize $-1$}\put(5,102){\scriptsize $\eta\rightarrow$}\put(-95,165){\scriptsize $v_\A\hspace{-0.6mm}\uparrow$}\put(-35,150){\footnotesize $v_\A=v_\A(\eta)$}\put(-195,150){\footnotesize $v_\A=-v_\A(\eta)$}\put(-98,139.5){\scriptsize $v_E$}\put(-103,0){\small\em (b)}\put(-383,0){\small\em (a)}}
	\caption{{\em (a):} The graph of the function $v_\A=v_\A(\eta)$, $-\infty<\eta<\frac{x_\A+x_\B}{2\sqrt{x_\A x_\B}}$. {\em (b):} The initial speed as a function of $\eta$ for the direct arc ($v_\A=v_\A(\eta)$, $-\infty<\eta<1$), and the indirect arc ($v_\A=-v_\A(\eta)$, $-1<\eta<\frac{x_\A+x_\B}{2\sqrt{x_\A x_\B}}$).}
\end{figure}

\begin{proof} The non-culminating arcs start from $\A$ with $v_\A<0$, bounce off of $\O$, and arrive at $\B$ with a $v_\B>0$. As $2H=v^2-2x^{-1}$, the elapsed time is
\begin{equation}\label{TI}
T_I^R(v_\A)=\int_0^{x_\A} {dx\over \sqrt {2H+2x^{-1}}}+\int_0^{x_\B} {dx\over \sqrt {2H+2x^{-1}}}.
\end{equation}
If $v_\A<0$, $\d T_I^R/\d v_\A>0$ since $\d T_I^R/\d H<0$, and $\d H/\d v_\A<0$. If $v_\A\in[0,v_E[$, (\ref{TI}) is incorrect since the body is culminating. But the orbit is periodic with period ${\cal T}(v_\A)=2\pi(-2H)^{-3/2}$. We have
\begin{equation}\label{T}
T_I^R(v_\A)={\cal T}(v_\A)-T_D^R(-v_\A).
\end{equation}
The first term has a non-negative derivative with respect to $v_\A$, the second a positive derivative, according to proposition \ref{TDR}.
\end{proof}

{\bf Remark.} The argument proving proposition \ref{TDS} does not pass to the indirect case since the contributions to the integral near $\theta_\A$ and near $\theta_\B$ are increasing functions of $\eta$, while the contribution to the integral near $-\pi/2$ is a decreasing function of $\eta$.

\section{Uniqueness in the general case}
\label{UGC}

Here we use Lambert's theorem to deduce statements about the general arcs (neither rectilinear nor symmetric). Recall that a Keplerian arc is called simple if it does not pass twice at the same point with the same velocity. In the next statement the motion of the body is extended after collision.

\begin{theorem}
In the Euclidean plane, consider three distinct points $\O$, $\A$, $\B$. There are exactly two simple Keplerian arcs around $\O$ going from $\A$ to $\B$ in a given positive elapsed time.
\end{theorem}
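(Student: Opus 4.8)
The plan is to reduce the general planar situation to the rectilinear one by means of Lambert's theorem. Recall (see the proof of Proposition \ref{lam1}) that the elapsed time along a Keplerian arc from $\A$ to $\B$ depends only on the chord $c=\|\A\B\|$, the sum $s=\|\O\A\|+\|\O\B\|$, the energy, and the \emph{type} of the arc: direct or indirect, simple or multi-revolution, and, for ellipses, direct or indirect with respect to the second focus. The three points $\O$, $\A$, $\B$ being fixed, $c$ and $s$ are fixed, and $c\le s$ by the triangle inequality.

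First I would dispose of the two collinear configurations. If $\A$ and $\B$ lie on a common ray from $\O$, then $c<s$, every Keplerian arc from $\A$ to $\B$ is rectilinear, and Corollaries \ref{UR} and \ref{UIR} provide exactly one direct and exactly one simple indirect such arc; these are distinct because the convex hull of the second contains $\O$ while that of the first does not. If $\O$ lies strictly between $\A$ and $\B$, so that $c=s$, then $\O$ belongs to the segment $\A\B$, hence to the convex hull of every arc from $\A$ to $\B$, and all such arcs are indirect; here the conics with focus $\O$ through $\A$ and $\B$ are exactly those symmetric about the perpendicular to $\A\B$ at $\O$, and, parametrizing them by the signed eccentricity $\eta\in\,]-\infty,1[$ and repeating the integral estimate of the corollary to Proposition \ref{TDS} over the full angular range $]0,\pi[$ (which, with $\theta_\A=\pi$, becomes even simpler), one sees that the elapsed time of the arc running over, respectively under, the chord is an increasing function of $\eta$ tending to $0$ and to $+\infty$ at the ends. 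This again yields exactly two arcs, mirror images of one another across the line $\A\B$.

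In the remaining case --- $\O$, $\A$, $\B$ not collinear, $c<s$ --- I would introduce the reference rectilinear configuration: points $\A_0,\B_0$ on a ray from $\O$ with $x_{\A_0}=(s+c)/2>x_{\B_0}=(s-c)/2>0$, so that $\|\A_0\B_0\|=c$ and $x_{\A_0}+x_{\B_0}=s$. By Corollaries \ref{UR} and \ref{UIR} together with Propositions \ref{TDR} and \ref{TIR}, the simple arcs from $\A_0$ to $\B_0$ with a given positive elapsed time $T$ are exactly two, one direct and one simple indirect, each occurring once in its family (parametrized by $v_{\A_0}\in\,]-\infty,v_E[$, with $T_D^R$ and $T_I^R$ increasing from $0$ to $+\infty$). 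The core of the proof is to transport this count to the original configuration: I would show that sending a simple arc from $\A$ to $\B$ to the pair (energy, type) defines a bijection between the simple arcs from $\A$ to $\B$ and those from $\A_0$ to $\B_0$, under which, by Lambert's theorem, the elapsed times agree. It then follows that for the prescribed $T$ there are again exactly one direct and one simple indirect arc from $\A$ to $\B$, hence exactly two; they are distinct because their types differ, and none is missed, every Keplerian arc being direct or indirect and every indirect one being simple or multi-revolution.

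The main obstacle is precisely this transport. One must check that the pencil of conics with a focus at $\O$ passing through two points at distances summing to $s$ and at chord-distance $c$ is, in the way that matters for the count, governed by $c$ and $s$ alone: that the admissible energies of each type form the same set (the elliptic ones being those with semi-major axis $a\ge(c+s)/4$), and that, type by type and energy by energy, there are equally many arcs, these being swept out monotonically as a ``signed eccentricity'' --- extending the parameter $\eta$ of Propositions \ref{TDS} and \ref{lam1} to a non-symmetric configuration --- runs through an interval. In particular one must verify that the conic-type transitions are transparent: the parabolic case $H=0$, and, inside the elliptic range, the passage between being direct and being indirect with respect to the second focus (the last sentence of the proof of Proposition \ref{lam1}), so that the arcs of each type glue into a single monotone branch, just as $\eta\in\,]-\infty,1[$ glues the direct hyperbolic, parabolic and elliptic arcs in Proposition \ref{TDS}. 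Pinning down this parameter and the monotonicity across the transitions, together with the short separate arguments in the two collinear degeneracies, is the only delicate point; the rest is Lambert's theorem combined with Propositions \ref{TDR} and \ref{TIR}.
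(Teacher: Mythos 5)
Your overall strategy is the paper's: reduce to the rectilinear configuration with the same chord $c=\|\A\B\|$ and sum $s=\|\O\A\|+\|\O\B\|$, and count there with Propositions \ref{TDR}, \ref{TIR} and Corollaries \ref{UR}, \ref{UIR}. But the step you yourself single out as ``the main obstacle'' --- that (energy, type) sets up a bijection between the simple arcs from $\A$ to $\B$ and those from $\A_0$ to $\B_0$, type by type and energy by energy, with a monotone ``signed eccentricity'' gluing the hyperbolic, parabolic and elliptic ranges --- is exactly the content you would have to prove, and you leave it unproved. The paper does not prove it either from scratch: it invokes the refined form of Lambert's theorem (Theorems 1 and 2 and Section 7 of \cite{alb}), which provides a continuous deformation of any planar arc into a rectilinear arc of the same $\|\A\B\|$, $\|\O\A\|+\|\O\B\|$, energy, elapsed time \emph{and type}, and hence the desired bijections between direct arcs and between simple indirect arcs (this is Theorem \ref{DI}). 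So your plan is compatible with the paper's, but as written it is a programme, not a proof; without either carrying out the monotone-parameter analysis you sketch (essentially the $\hat\beta$ parameter of Section \ref{parameter}) or appealing to the type-refined Lambert theorem, the count does not transport.

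The degenerate case $\O\in\,]\B,\A[$ contains a concrete error. You assert that the conics with focus $\O$ through $\A$ and $\B$ are exactly those symmetric about the perpendicular to $\A\B$ at $\O$; this holds only when $\|\O\A\|=\|\O\B\|$. In general the component of the eccentricity vector along the chord is fixed at $\alpha=(r_\A-r_\B)/(r_\A+r_\B)\neq 0$ (cf.\ (\ref{alpha})), so the axis of symmetry is not that perpendicular, the polar equation $r=C^2/(1-\eta\sin\theta)$ does not describe the pencil, and ``repeating the integral estimate of Proposition \ref{TDS} with $\theta_\A=\pi$'' is not meaningful. The monotonicity can be salvaged directly (fix $\alpha$, vary the orthogonal component $\beta$ of the eccentricity vector, and differentiate the time integral over the angular range $\pi$ in $\beta$), but that is a different computation from the one you cite. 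The paper instead handles this case by the reflection symmetry across the line $\A\O\B$ together with the easy extension of Proposition \ref{TDR} to the endpoint $\B=\O$, again feeding it through the Lambert reduction, and concludes that the two solutions are both indirect.
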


\begin{proof} If $\O$ is not on the segment $\A\B$, there is exactly one direct arc and exactly one simple indirect arc, according to theorem \ref{DI} below. In the exceptional case $\O\in\,]\B,\A[$, a reflection sends a solution of the Lambert problem in the plane onto another. The argument proving theorem \ref{DI} extends, requiring only the easy extension of proposition \ref{TDR} to the case $\O=\B$. There are exactly two arcs, which are indirect.
\end{proof}

\begin{theorem}\label{DI}
 In the Euclidean plane or space, consider three distinct points $\O$, $\A$, $\B$ such that $\O$ is not on the segment $\A\B$.
There is a unique direct Keplerian arc around $\O$ and a unique simple indirect Keplerian arc around $\O$ going from $\A$ to $\B$ in a given positive elapsed time.
\end{theorem}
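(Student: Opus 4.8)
The key is to reduce the general configuration to the two one-parameter families already analyzed (the direct symmetric/rectilinear arc and the simple indirect rectilinear arc) via Lambert's theorem, which is already established as a tool in this paper (see Section~4 and Proposition~\ref{lam1}). Fix $\O$, $\A$, $\B$ with $\O\notin[\A,\B]$, and set $\ell=\|\A\B\|$ and $s=\|\O\A\|+\|\O\B\|$. Lambert's theorem says that among arcs of a fixed type (direct/indirect, simple/$n$-revolution, and in the elliptic case the type relative to the second focus), the elapsed time depends only on $(\ell,s)$ and the energy $H$. So the first step is to build, for each of the two types in the statement, a model configuration with the \emph{same} $(\ell,s)$ for which we already know everything.

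\textbf{The direct arc.} I would introduce the isosceles triangle $\A'\O\B'$ with $\|\O\A'\|=\|\O\B'\|=s/2$ and $\|\A'\B'\|=\ell$, which is the setting of Proposition~\ref{TDS}. By that proposition the elapsed time $T_D^S(\eta)$ of the direct symmetric arc is a continuous, strictly increasing function of $\eta\in\,]-\infty,1[$ with range $]0,+\infty[$; hence for the given positive $T$ there is exactly one $\eta$, hence exactly one direct symmetric arc in the model. Now Lambert's theorem transfers this: any direct arc from $\A$ to $\B$ around $\O$ with elapsed time $T$ must have the same $(\ell,s,H)$ as exactly one of these model arcs, and conversely for each model energy there is (by the existence half, which one gets e.g.\ by constructing the conic through $\A$, $\B$ with that energy, as in the proof of the corollary to Proposition~\ref{TDS}) exactly one direct arc from $\A$ to $\B$. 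So the count of direct arcs from $\A$ to $\B$ with time $T$ equals the count of direct symmetric arcs in the model with time $T$, which is $1$. The same argument, using Proposition~\ref{TDR} instead, handles the rectilinear endpoint configurations; Proposition~\ref{lam1} guarantees the two models agree, so there is no inconsistency.

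\textbf{The simple indirect arc.} Here I would use Proposition~\ref{TIR}: in the rectilinear model with $0<x_\B<x_\A$, $x_\A+x_\B=s$, $x_\A-x_\B=\ell$, the elapsed time $T_I^R(v_\A)$ of the simple indirect (collision) arc is continuous, strictly increasing in $v_\A\in\,]-\infty,v_E[$ with range $]0,+\infty[$, so for the given $T$ there is exactly one such arc in the model. Again Lambert's theorem — now applied to the indirect, simple type — lets me transfer: the number of simple indirect arcs from $\A$ to $\B$ around $\O$ with elapsed time $T$ equals the number of simple indirect rectilinear arcs in the model with elapsed time $T$, namely $1$. Adding the two counts gives the theorem.

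\textbf{Main obstacle.} The delicate point is not the monotonicity (that is done in Propositions~\ref{TDS}, \ref{TDR}, \ref{TIR}) but the careful bookkeeping in the application of Lambert's theorem: one must check that ``same $(\ell,s,H)$ and same type'' really does set up a \emph{bijection} between arcs from $\A$ to $\B$ and arcs in the model, i.e.\ that within a fixed type the map ``arc $\mapsto H$'' is injective on arcs from $\A$ to $\B$ (so two model energies cannot collapse to one real arc) and surjective onto the relevant energy interval (existence of a conic through $\A$, $\B$ with prescribed energy and prescribed type). The elliptic case needs the extra refinement, already flagged after Proposition~\ref{lam1}, of distinguishing arcs by their type relative to the \emph{second} focus: an elliptic arc that is direct with respect to $\O$ may be either direct or indirect with respect to the second focus, and Lambert's theorem must be invoked separately on each of these sub-types, matching the two $v_\A$-subintervals $]-v_E,0[$ and $[0,v_E[$ described at the end of the proof of Proposition~\ref{lam1}. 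Once the subdivision is organized so that each sub-type is a single monotone branch, the counting is immediate. The exceptional collinear cases ($\O$ on the line $\A\B$ but outside the segment, or $\O$ an endpoint) require only the trivial extension of Proposition~\ref{TDR} allowing $x_\B=0$, as indicated in the proof of the first theorem above.
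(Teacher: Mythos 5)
Your overall strategy --- reduce to a model configuration with the same $\|\A\B\|$ and $\|\O\A\|+\|\O\B\|$ via Lambert's theorem, then conclude from the monotonicity results --- is exactly the paper's, and the monotonicity part (propositions \ref{TDS}, \ref{TDR}, \ref{TIR} giving a strictly increasing elapsed time with range $]0,+\infty[$) is used correctly. The difference, and the gap, is in the transfer step. The paper never matches arcs by energy: it invokes the continuous-deformation form of Lambert's theorem (Theorems 1 and 2 and Sect.\ 7 of \cite{alb}), in which every general arc is deformed, keeping $\|\A\B\|$, $\|\O\A\|+\|\O\B\|$, the energy, the elapsed time \emph{and the type}, into a rectilinear arc; the deformation itself is the asserted bijection between direct (resp.\ simple indirect) arcs from $\A$ to $\B$ and direct (resp.\ simple indirect) rectilinear arcs, so corollaries \ref{UR} and \ref{UIR} finish the proof with no further bookkeeping. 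Your version relies instead on the weaker statement ``same $(\|\A\B\|,\|\O\A\|+\|\O\B\|,H)$ and same type imply same elapsed time'', and therefore needs an additional lemma, which you correctly flag as the ``main obstacle'' but do not prove: that for each admissible energy and each sub-type (direct/indirect and, for elliptic arcs, direct/indirect with respect to the \emph{second} focus) there is exactly one arc of that sub-type from $\A$ to $\B$, and that the correspondence with the model arc preserves the second-focus sub-type.

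This lemma is genuinely needed and is not supplied by what you cite. The energy is not injective along the direct family (on the elliptic range each admissible negative energy is attained by two direct arcs, distinguished precisely by the second focus), so ``the same $(\ell,s,H)$ as exactly one model arc'' fails without the refinement; and the existence step you borrow from the corollary to proposition \ref{TDS} uses the symmetry of the isosceles configuration (equal distances to $\O$, hence symmetric conics) and does not carry over to general $\A$, $\B$ --- for that one needs something like the eccentricity-vector parametrization of section \ref{parameter}, which the paper develops but does not use in this proof. So your plan is workable in principle, but as written the key bijection is organized rather than established; the quickest repair is the paper's own device, namely quoting the deformation form of Lambert's theorem, which yields the type- and time-preserving bijection directly and bypasses the energy bookkeeping altogether.
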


\begin{proof}
By Lambert's Theorem (see Theorems 1 and 2 and Sect.\ 7 in \cite{alb}), any arc in the general case may be continuously changed into a rectilinear arc of same $\|\A\B\|$, same $\|\O\A\|+\|\O\B\|$, same energy, same elapsed time and same type. This continuous change defines a bijection between the direct arcs, and a bijection between the simple indirect arcs. Thus, the existence and uniqueness corollaries \ref{UR} and \ref{UIR} in the rectilinear case give the result. 
\end{proof}

\section{Counter-examples to convexity}
\label{CEX}

\begin{lemma}\label{infini}
The second derivative with respect to $v_\A$, at $v_\A=0$, of the function $T_D^R(v_\A)$ defined in proposition \ref{TDR} tends to $+\infty$ when $\B\to \A$.
\end{lemma}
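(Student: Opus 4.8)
The plan is to evaluate, at the single value $v_\A=0$, the expression for $\d^2 T_D^R/\d v_\A^2$ obtained in the proof of Proposition~\ref{CTDR}, and to observe that among its three nonnegative summands exactly one blows up as $\B\to\A$, while the other two stay bounded. So first I would set $v_\A=0$ in
$$\frac{\d^2 T_D^R}{\d v_\A^2}=-\frac{x_\B^2}{v_\B^3}\bigl(v_\B^2-v_\A^2\bigr)+2x_\B^3(v_\B-v_\A)\Bigl(\frac{v_\A-v_\B}{v_\B}\Bigr)-6\int_{0}^{v_\B-v_\A}x^4u^2\,\d u,$$
which collapses to
$$\frac{\d^2 T_D^R}{\d v_\A^2}\Big|_{v_\A=0}=-\frac{x_\B^2}{v_\B}-2x_\B^3 v_\B-6\int_{0}^{v_\B}x^4u^2\,\d u .$$
As recorded in that same proof, with $v_\B<0$ all three terms are $\ge 0$; in particular $\d^2 T_D^R/\d v_\A^2|_{v_\A=0}\ge -x_\B^2/v_\B$.

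The next step is to control $v_\B$ at $v_\A=0$. There the body starts at rest at $\A$, so $H^R=-1/x_\A$, and conservation of energy gives $v_\B=-\sqrt{2/x_\B-2/x_\A}=-\sqrt{2(x_\A-x_\B)/(x_\A x_\B)}$. As $\B\to\A$, that is $x_\B\to x_\A^-$, we get $v_\B\to 0^-$, hence $-x_\B^2/v_\B=x_\B^2/|v_\B|\to+\infty$ (more precisely it is comparable to $x_\A^{3}/\sqrt{2(x_\A-x_\B)}$). Combined with the lower bound of the previous paragraph this forces $\d^2 T_D^R/\d v_\A^2|_{v_\A=0}\to+\infty$, which is the assertion.

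I do not expect a genuine obstacle: this is a short computation, and the only things to check are routine. One must note that the formula from Proposition~\ref{CTDR} is being used at an interior point $v_\A=0\in\,]-\infty,v_E[$, where it is valid; and one must confirm that the two remaining summands really are bounded. For this it suffices that along the direct arc from $\A$ to $\B$ one has $x_\B\le x\le x_\A$, so that $-2x_\B^3 v_\B\to 0$ and $0\le -6\int_0^{v_\B}x^4u^2\,\d u=6\int_{v_\B}^0 x^4u^2\,\d u\le 2x_\A^4|v_\B|^3\to 0$. Conceptually, the culminating arc issued with $v_\A=0$ reaches $\B$ with a speed $|v_\B|$ that is small precisely when $\A$ and $\B$ are close, and this $v_\B$ sits in the denominator of the dominant term of the second derivative.
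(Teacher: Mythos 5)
Your proposal is correct and takes essentially the same route as the paper: evaluate $\d^2 T_D^R/\d v_\A^2$ at $v_\A=0$, identify the term $-x_\B^2/v_\B$, which tends to $+\infty$ because $v_\B\to 0^-$ as $\B\to\A$, and note that the remaining nonnegative terms cannot spoil the divergence. The only cosmetic difference is that you reuse the three-term formula from the proof of Proposition \ref{CTDR} (obtained from (\ref{velo1})), whereas the paper differentiates (\ref{velo}) twice to get a two-term expression at $v_\A=0$; the two expressions agree after an integration by parts, so this is a harmless variation.
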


\begin{proof}
In proposition \ref{CTDR} an expression of $\d^2 T_D^R/\d v_\A^2$ was deduced by differentiating twice expression (\ref{velo1}). If instead we differentiate twice (\ref{velo}), and then make $v_\A=0$, we find
$$\frac{\d^2 T_D^R}{\d v_\A^2}\Big |_{v_\A=0}=-\frac{x_\B^2}{v_\B}+2\int_{v_\B}^{0}x^3\d v.$$
When $\B\to\A$, $v_\B\to 0$ negatively, the first term tends to $+\infty$ and the second term goes to zero.
\end{proof}

\begin{proposition}\label{nonc}
The function $T_I^R(v_\A)$ of proposition \ref{TIR}, giving the elapsed time on a simple indirect rectilinear arc as a function of the initial velocity $v_\A$, is not convex if $\B$ is close enough to $\A$.
\end{proposition}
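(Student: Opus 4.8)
The plan is to differentiate twice the formula (\ref{T}) obtained in the proof of proposition \ref{TIR}, namely
$$T_I^R(v_\A)={\cal T}(v_\A)-T_D^R(-v_\A),\qquad v_\A\in[0,v_E[,$$
and to read off the sign of the resulting second derivative at $v_\A=0$ by means of lemma \ref{infini}. Both terms on the right-hand side extend to functions that are smooth in $v_\A$ near $0$: the period ${\cal T}(v_\A)=2\pi(2/x_\A-v_\A^2)^{-3/2}$ is real-analytic there, while $v_\A\mapsto T_D^R(-v_\A)$ is smooth near $0$ because $-v_\A$ then lies in the interior of the interval $]-\infty,v_E[$ on which, by propositions \ref{TDR} and \ref{CTDR}, $T_D^R$ is of class $C^2$. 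Hence $T_I^R$ is of class $C^2$ on $[0,v_E[$ and
$$\frac{\d^2 T_I^R}{\d v_\A^2}\Big|_{v_\A=0}={\cal T}''(0)-\frac{\d^2 T_D^R}{\d v_\A^2}\Big|_{v_\A=0}.$$

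The first term is innocuous: a direct computation gives ${\cal T}''(0)=6\pi\,(x_\A/2)^{5/2}$, a fixed positive number which does not depend on the position of $\B$ (the period of a rectilinear orbit depends only on its energy, hence here only on $x_\A$). The second term is precisely the quantity studied in lemma \ref{infini}, and that lemma asserts it tends to $+\infty$ as $\B\to\A$. Consequently $\d^2 T_I^R/\d v_\A^2$ at $v_\A=0$ tends to $-\infty$ as $\B\to\A$, so it is strictly negative once $\B$ is close enough to $\A$. A $C^2$ convex function has nonnegative second derivative at every point of its interval of definition, including at the endpoint $v_\A=0$; therefore $T_I^R$ fails to be convex on $[0,v_E[$ for such $\B$. (Equivalently: by continuity $\d^2 T_I^R/\d v_\A^2$ stays negative on a subinterval $[0,\delta)$, where $T_I^R$ is strictly concave.)

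I do not anticipate a genuine obstacle. The only points requiring a little care are the justification that (\ref{T}) may be differentiated twice up to the endpoint $v_\A=0$ — which is handled by the smooth extension of both summands — and the observation that ${\cal T}''(0)$ stays bounded as $\B\to\A$, which is immediate since ${\cal T}$ does not involve $\B$ at all. The substantive input, the blow-up of $\d^2 T_D^R/\d v_\A^2$ at $v_\A=0$, is already isolated in lemma \ref{infini}, so what remains is essentially bookkeeping.
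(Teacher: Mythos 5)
Your proposal is correct and follows essentially the same route as the paper: differentiate the decomposition $T_I^R(v_\A)={\cal T}(v_\A)-T_D^R(-v_\A)$ twice, evaluate at $v_\A=0$, note that the period term stays bounded as $\B\to\A$ while lemma \ref{infini} makes the other term blow up, so the second derivative becomes negative. Your explicit value of ${\cal T}''(0)$ and the remark about differentiability up to the endpoint are just added bookkeeping on top of the paper's (terser) argument.
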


\begin{proof}
Equation (\ref{T}) is $T_I^R(v_\A)={\cal T}(v_\A)-T_D^R(-v_\A)$.  Differentiating twice, making $v_\A=0$ and $\B\to \A$, the first term remains finite while the second tends to $-\infty$ according to the lemma. Thus  $T_I^R(v_\A)$ is not convex at $v_\A=0$ if $\B$ is close enough to $\A$.
\end{proof}

{\bf Remark.} Lemma \ref{infini} may be explained as follows. Consider the limit
of the function $T_D^R(v_\A)$ as $\B\to \A$. When $v_\A>0$, the limit is the elapsed time needed to start from $\A$ with a velocity $v_\A$, culminate and come back to $\A$. Using for example (\ref{velo}) we see that $T_D^R(v_\A)\sim 2x_\A^2 v_\A$ for small positive $v_\A$. For $v_\A<0$, the limit is zero. Consequently, the pointwise limit of the function $T_D^R(v_\A)$ is not differentiable at $v_\A=0$.

{\bf Remark.} The variable $v_\A$ was introduced in section \ref{UDRA} as a natural parameter of the family of rectilinear arcs from an exterior point $\A$ to an interior point $\B$. Section 4 introduces the parameter $v_\A(\eta)$ of the family of symmetric arcs with given symmetric ends $\A$ and $\B$. This is not the velocity at $\A$, but the parameter corresponding to the rectilinear $v_\A$ through Lambert's theorem. Section \ref{UGC} implicitly uses $v_\A$ as a parameter of the family of arcs with arbitrary given ends, since it reduces the general case to the rectilinear case by using Lambert's theorem. {\it Proposition \ref{CTDR}, lemma \ref{infini}, proposition \ref{nonc} and the above remark apply as well to a family of arcs with arbitrary fixed ends}, that we parametrize by $v_\A$. Proposition \ref{CTDR} and lemma \ref{infini} concern the family of direct arcs, while proposition \ref{nonc} concerns the family of simple indirect arcs. The conditions $\B \to \A$ and $\B$ sufficiently close to $\A$ remain unchanged.

{\bf Remark.} The statements of the above lemma, proposition and remarks are essentially due to Lancaster and Blanchard \cite{lb} (see their Figures 2 and 4). But this only appears if we notice that their variable $x$ is, up to a constant factor, our variable $v_\A$. Indeed, they define $x=\cos(\alpha/2)$ where $\alpha$ is defined by their formulas $(9)$ and $(10)$, due to Lagrange \cite{lagrange4}, p.\ 563, which translate into formulas the reduction to the rectilinear case proposed by Lambert's theorem. After this reduction, $\alpha$ is simply the eccentric anomaly of the exterior end $\A$, in the Keplerian rectilinear motion from the origin $\O$. Thus, $x_\A=a(1-\cos \alpha)$, where $a$ is the semi-major axis of the flat ellipse. Now the energy gives $v_\A^2=2/x_\A-1/a=2/x_\A-(1-\cos\alpha)/x_\A$ or $x_\A v_\A^2=2\cos^2(\alpha/2)$. Recall that the escape velocity is $v_E=\sqrt{2/x_\A}$. Lancaster and Blanchard's variable $x$ satisfies $v_\A=x v_E$. Similarly, their variable $q$ satisfies $q^2 x_\A =x_\B$.

\begin{proposition}
The function $T_I^S(\eta)$, giving the elapsed time on a simple indirect symmetric arc as a function of the signed eccentricity $\eta$, is not convex if $\theta_\A$ is close enough to $\pi/2$.
\end{proposition}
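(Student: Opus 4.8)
The plan is to reduce to the rectilinear case by Lambert's theorem, exactly as Proposition~\ref{lam1} and Theorem~\ref{DI} do for the other symmetric statements, and then to read off the sign of the second derivative from the chain rule, using Proposition~\ref{nonc} for the rectilinear ingredient. Fix $r_\A>0$ and $\theta_\A\in\,]0,\pi/2[$, and let $x_\A,x_\B$ be the rectilinear positions attached to the isosceles triangle $\A\O\B$ by (\ref{S}), so that $x_\A=r_\A(1+\cos\theta_\A)$, $x_\B=r_\A(1-\cos\theta_\A)$ and $x_\B/x_\A=\tan^2(\theta_\A/2)$. By Lambert's theorem (in the form recalled in the proof of Theorem~\ref{DI}), the simple indirect symmetric arc of signed eccentricity $\eta$ and the simple indirect rectilinear arc of initial velocity $-v_\A(\eta)$ have the same elapsed time, where $v_\A(\eta)$ is the right-hand side of (\ref{v}) and the sign is the one making the two arcs of the same type (the indirect analogue of the sign choice in Proposition~\ref{lam1}, cf.\ figure~2$b$). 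Thus $T_I^S(\eta)=T_I^R\bigl(w(\eta)\bigr)$, where $w(\eta):=-v_\A(\eta)$ is a smooth strictly decreasing change of variable carrying the indirect interval $\,]-1,1/\sin\theta_\A[\,$ onto $\,]-\infty,v_E[\,$.

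The point $\eta_0:=\tan(\theta_\A/2)$ lies in the interior of $\,]-1,1/\sin\theta_\A[\,$, since $0<\eta_0<1<1/\sin\theta_\A$; and by (\ref{S}) the numerator of (\ref{v}) vanishes at $\eta=\sqrt{x_\B/x_\A}=\eta_0$, so $w(\eta_0)=0$. Differentiating $T_I^S=T_I^R\circ w$ twice at $\eta_0$ gives
$$\frac{\d^2T_I^S}{\d\eta^2}(\eta_0)=(T_I^R)''(0)\,w'(\eta_0)^2+(T_I^R)'(0)\,w''(\eta_0).$$
The second summand is strictly negative: $(T_I^R)'(0)>0$ by Proposition~\ref{TIR}, while a short computation---the same one invoked in Proof~2 of Proposition~\ref{CTDS}, see figure~2$a$---shows that $v_\A(\eta)$ is strictly convex on $\,]-\infty,1/\sin\theta_\A[\,$, hence $w=-v_\A$ is strictly concave there and $w''(\eta_0)<0$. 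For the first summand, note that as $\theta_\A\to\pi/2$ one has $x_\B/x_\A=\tan^2(\theta_\A/2)\to1$ with $x_\A+x_\B=2r_\A$ fixed, i.e.\ the rectilinear endpoints collapse, $\B\to\A$; and the proof of Proposition~\ref{nonc} shows precisely that $(T_I^R)''(0)\to-\infty$ in this limit. Hence $(T_I^R)''(0)<0$ once $\theta_\A$ is close enough to $\pi/2$, so the first summand is then $\leq0$, and $\frac{\d^2T_I^S}{\d\eta^2}(\eta_0)<0$. Thus $T_I^S$ is not convex for such $\theta_\A$ (in fact $\frac{\d^2T_I^S}{\d\eta^2}(\eta_0)\to-\infty$ as $\theta_\A\to\pi/2$, since the first summand does and the second is negative).

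The two points I expect to require care are both routine rather than deep. The first is pinning down the indirect Lambert correspondence and, in particular, that the rectilinear velocity is $-v_\A(\eta)$ rather than $+v_\A(\eta)$---the same sign ambiguity as in Proposition~\ref{lam1}, resolved by matching arc types---together with the verification that $w(\eta_0)=0$ lands exactly at the rectilinear velocity $v_\A=0$ at which Proposition~\ref{nonc} applies. The second is the strict concavity of $w$ on $\,]-1,1/\sin\theta_\A[\,$, a one-line differentiation of (\ref{v}) but worth writing out explicitly. Everything else is immediate from what is already proved: Proposition~\ref{TIR} gives the sign of $(T_I^R)'$, and Proposition~\ref{nonc} gives, at the single point $\eta_0$, the blow-up of $(T_I^R)''(0)$ that drives the conclusion.
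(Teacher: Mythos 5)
Your proof is correct and follows essentially the same route as the paper's: the sign-reversed Lambert correspondence (\ref{mv}) between the simple indirect symmetric and rectilinear arcs, the chain rule evaluated at the point mapped to $v_\A=0$, the concavity of the change of variable together with proposition \ref{TIR} to make the second term negative, and the blow-up of $\d^2T_I^R/\d v_\A^2$ at $v_\A=0$ from proposition \ref{nonc} (via lemma \ref{infini}) as $\theta_\A\to\pi/2$ for the first term. Your explicit identification of $\eta_0=\tan(\theta_\A/2)$ and of the interval correspondence just spells out details the paper leaves implicit.
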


\begin{proof}
There is an analogue of proposition \ref{lam1} for simple indirect arcs. We have $T_I^S(\eta)=T_I^R(v_\A)$ with
\begin{equation}\label{mv}
v_\A(\eta)=\frac{\sqrt{x_\B/x_\A}-\eta}{\sqrt{{(x_\A+x_\B)/2}-\eta \sqrt{x_\A x_\B}}}.
\end{equation}
Compared to the direct case, the sign is changed, in agreement with the last argument in the proof of proposition \ref{lam1}. For the indirect arc, we still have $v_\A\in\,]-\infty,v_E[$, but $\eta\in\,]-1,1/\sin\theta_\A[$, where $\sin\theta_\A=2\sqrt{x_\A x_\B}/(x_\A+x_\B)$ according to (\ref{S}). The corresponding successions of types of arc can be checked using figure 2$b$. Now $v_\A$ is a concave function of $\eta$ and
$$\frac{\d^2T^S_I}{\d \eta^2}=\frac{\d^2T^R_I}{\d v_\A^2}\Bigl(\frac{\d v_\A}{\d \eta}\Bigr)^2+\frac{\d T^R_I}{\d v_\A}\frac{\d^2 v_\A}{\d \eta^2}<0\qquad\hbox{if}\quad\frac{\d^2T^R_I}{\d v_\A^2}<0.$$
When $x_\B\to x_\A$ in the rectilinear arc, $\theta_\A\to\pi/2$ in the corresponding symmetric arc. We have ${\d^2T^R_I}/{\d v_\A^2}<0$ at $v_\A=0$ according to the previous proposition. Thus $T_I^S$ is not convex.
\end{proof}

We established in section \ref{convd} the convexity of the elapsed time along the direct arc, as a function of $v_\A$ or $\eta$. Such a result may be useful since the Newton method for searching a root converges if applied to a convex function from an initial guess where the function is positive. It would be useful to have a similar result for the indirect arc. We have just shown that neither $v_\A$ nor $\eta$ is the variable we need. Sim\'o (see \cite{simo}, p.\ 242) has a claim which implies that the elapsed time along the indirect arc is convex in the variable $(u_\B-u_\A)^2$, where $u$ is the eccentric anomaly, defined on the Keplerian ellipses. According to him the proof is long. We cannot provide a short proof. 

\section{Pairs of multi-revolution arcs with same type}

\begin{proposition}\label{pair}
 In the Euclidean plane or space, consider three distinct points $\O$, $\A$, $\B$ such that $\O$ is not on the segment $\A\B$. Choose a positive integer $n$, and consider the Keplerian arcs around $\O$ starting from $\A$, passing again $n$ times through $\A$ with the initial velocity, and then going from $\A$ to $\B$ along a direct arc. There is a positive time $T_{\min}$ such that no such arc is travelled in an elapsed time $T<T_{\min}$,  a unique such arc is travelled in an elapsed time $T_{\min}$ and exactly two such arcs are travelled in any given elapsed time $T>T_{\min}$.
\end{proposition}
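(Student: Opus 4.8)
The plan is to reduce the problem to the one-parameter family of direct arcs from $\A$ to $\B$ already understood in Sections \ref{UGC}--\ref{CEX}, and then to argue by convexity. Recall that, by the reduction to the rectilinear case behind Theorem \ref{DI} (and the discussion in Section \ref{CEX}), the direct Keplerian arcs from $\A$ to $\B$ are parametrized bijectively by $v_\A\in\,]-\infty,v_E[$; since Lambert's theorem preserves the energy, the arc with parameter $v_\A$ has energy $H(v_\A)=\frac{1}{2}v_\A^2-1/x_\A$, where $x_\A=\bigl(\|\O\A\|+\|\O\B\|+\|\A\B\|\bigr)/2$ is the abscissa of the exterior end in the rectilinear reduction (positive since $\O\notin[\A,\B]$) and $v_E=\sqrt{2/x_\A}$. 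By Proposition \ref{TDR} the elapsed time $T_D^R(v_\A)$ is increasing with nonzero derivative, with $T_D^R\to+\infty$ as $v_\A\to v_E$; by Proposition \ref{CTDR} it is strictly convex. The underlying conic is an ellipse precisely when $H(v_\A)<0$, i.e. $v_\A\in\,]-v_E,v_E[$, and then its period is a function of $v_\A$ alone, ${\cal T}(v_\A)=2\pi\bigl(-2H(v_\A)\bigr)^{-3/2}=2\pi\bigl(2/x_\A-v_\A^2\bigr)^{-3/2}$.

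Next I would observe that the arcs described in the statement are exactly those obtained by running $n$ full revolutions of such an ellipse, starting from and returning to $\A$ with the initial velocity, and then traversing the direct arc from $\A$ to $\B$ on it; conversely every arc in the statement arises this way, its last leg being a direct arc which determines the ellipse and hence $v_\A$. So these arcs are parametrized bijectively by $v_\A\in\,]-v_E,v_E[$ and the elapsed time along such an arc is
$$T(v_\A)=n\,{\cal T}(v_\A)+T_D^R(v_\A),\qquad v_\A\in\,]-v_E,v_E[.$$
The core of the argument is then that $T$ is strictly convex: a routine computation gives
$$\frac{\d^2{\cal T}}{\d v_\A^2}=6\pi\,\bigl(2/x_\A-v_\A^2\bigr)^{-7/2}\bigl(2/x_\A+4v_\A^2\bigr)>0$$
on $]-v_E,v_E[$, so ${\cal T}$ is strictly convex there, and adding the strictly convex $T_D^R$ keeps strict convexity. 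Moreover ${\cal T}(v_\A)\to+\infty$ both as $v_\A\to-v_E$ and as $v_\A\to v_E$, while $T_D^R$ remains positive and stays finite near $-v_E$, so $T(v_\A)\to+\infty$ at both endpoints of the interval.

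To finish I would invoke the elementary fact that a strictly convex function on an open interval which tends to $+\infty$ at both ends attains its infimum $T_{\min}$ at a unique interior point, is strictly decreasing to its left and strictly increasing to its right, and therefore takes every value $T>T_{\min}$ exactly twice and the value $T_{\min}$ exactly once, while no value $T<T_{\min}$ is attained; and $T_{\min}\geq n\,{\cal T}>0$. Pulled back through the bijection $v_\A\mapsto(\hbox{arc})$, this gives precisely the claimed count.

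I expect the only place demanding care to be the bookkeeping of the second step: verifying that the correspondence between $v_\A\in\,]-v_E,v_E[$ and the arcs under consideration is genuinely a bijection (distinct $v_\A$ give geometrically distinct arcs; every such arc has its last leg a direct arc lying on the same ellipse and so corresponds to some $v_\A\in\,]-v_E,v_E[$; no arcs are missed), and that ${\cal T}$ really depends on $v_\A$ alone — which rests on the energy-preservation in Lambert's theorem and on Proposition \ref{lam1}. The convexity computation, the endpoint limits, and the final counting step are all routine.
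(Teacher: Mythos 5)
Your proposal is correct and follows essentially the same route as the paper: reduce to the rectilinear case by Lambert's theorem, parametrize by $v_\A\in\,]-v_E,v_E[$, write $T(v_\A)=n\,{\cal T}(v_\A)+T_D^R(v_\A)$, and conclude from the strict convexity of both terms (the formula for ${\cal T}$ and proposition \ref{CTDR}) together with the divergence of $T$ at both endpoints. Your version merely spells out the second-derivative computation for ${\cal T}$ and the final counting argument, which the paper leaves implicit.
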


\begin{proof}
As in the proof of theorem \ref{DI}, Lambert's theorem reduces the proposition to the rectilinear problem with $0<x_\B<x_\A$, where the motion is extended after collision. The multi-revolution arcs require a periodic orbit. They are parametrized by $v_\A\in\,]-v_E,v_E[$, where $v_E=\sqrt{2/x_\A}$ is the escape velocity. The time of travel is $T(v_\A)=n{\cal T}(v_\A)+T_D^R(v_\A)$ where ${\cal T}(v_\A)=2\pi(2/x_\A-v_\A^2)^{-3/2}$ is the period. It tends to $+\infty$ at both ends of the interval.
Both terms, and consequently $T$, are convex functions of $v_\A$ with non-zero second derivative, according to the formula for ${\cal T}(v_\A)$ and to proposition \ref{CTDR}.\end{proof}

For completeness, we would like to state a similar proposition where ``a direct arc'' is changed into ``an indirect arc''. But in the proof we would need the convexity of $T_I^R$, the time of travel along an indirect arc. We do not have the analogue of proposition \ref{CTDR}, and we even gave relevant counter-examples in section \ref{CEX}. The convexity could be proved in another variable than $v_\A$. In particular, the convexity announced by Sim\'o  (see section \ref{CEX}) gives immediately the analogue of proposition \ref{pair}. We were able to check numerically this convexity as well as the convexity of the period ${\cal T}$. But, no simple proof is known, and no proof has been published, as far as we know.

Eliasberg \cite{eli} enunciates all the above conclusions and some others, notably, about total number of solutions of the Lambert problem, including all the types, simple or multi-revolution. This requires a study of the various values of $T_{\min}$, as a function of $n$ and of the type of the remaining arc, direct or indirect. However, as already commented in the introduction, it seems that part of the argument of \cite{eli} is based on a figure drawn in a particular case.

\section{A parameter for the general arc}
\label{parameter}

The continuous change of arc given by Lambert's theorem maps a direct general arc to a direct rectilinear arc or, as well, to a direct symmetric arc. This process allows to associate uniquely to such a general arc the parameter $v_\A$ of the rectilinear arc and, as well, the parameter $\eta$ of the symmetric arc. These two parameters will furthermore correspond one to each other by formula (\ref{v}).

There is a remarkably simple expression of the parameter $\eta$ from the parametrization by the eccentricity vector of the orbits passing through two points $\A$ and $\B$. The parameter $v_\A$ is subsequently obtained through (\ref{v}).

The nonrectilinear Keplerian branches in the plane $\O xy$ satisfy an equation
\begin{equation}
r=\alpha x+\beta y+\gamma
\end{equation}
where $r=\sqrt{x^2+y^2}>0$, $(\alpha,\beta)\in\R^2$ is the eccentricity vector and $\gamma=C^2>0$ is the semi-parameter of the conic section, which is also the square of the angular momentum. The conditions of passing through the points $\A=(x_\A,y_\A)$
and $\B=(x_\B,y_\B)$ are affine conditions in $(\alpha,\beta,\gamma)\in\R^3$:
\begin{equation}\label{AB}
r_\A=\alpha x_\A+\beta y_\A+\gamma, \quad r_\B=\alpha x_\B+\beta y_\B+\gamma.
\end{equation}
If  $\A\neq\B$, both equations are independent. The family of Keplerian branches passing through $\A$ and $\B$ is thus parametrized by an interval of a line in $\R^3$, defined by $(\ref{AB})$ and the inequality $\gamma>0$. If $\A$ and $\B$ are on a same ray from $\O$,
then $(\ref{AB})$ implies $\gamma=0$, which is forbidden.

 By choosing an appropriate frame $\O xy$, we may assume that $y_\A=y_\B\geq 0$, and, in the case $y_\A=y_\B=0$, that $x_\B<0<x_\A$. The difference of both equations (\ref{AB}) gives
\begin{equation}\label{alpha}
\alpha=\frac{r_\A-r_\B}{x_\A-x_\B}.
\end{equation}
The abscissa $\alpha$ of the eccentricity vector is thus the same for all the orbits passing through $\A$ and $\B$  (as observed in \cite{bfs}). Furthermore,  the triangular inequality gives $-1< \alpha<1$. Once deduced (\ref{alpha}), $\gamma$ is given by (\ref{AB}) as an affine function of $\beta$. The condition $\gamma>0$ delimitates an unbounded interval for $\beta$. If $y_\A=y_\B=0$ and $x_\A x_\B<0$, this interval is $\R$; otherwise, this interval is bounded from above. A convenient explicit expression of $\gamma$ is obtained as follows. We first notice the identity:
\begin{equation}\label{alphab}
\alpha=\frac{r_\A-r_\B}{x_\A-x_\B}=\frac{r_\A^2-r_\B^2}{(x_\A-x_\B)(r_\A+r_\B)}=\frac{x_\A+x_\B}{r_\A+r_\B}
\end{equation}
since $r_\A^2=x_\A^2+y_\A^2$, $r_\B^2=x_\B^2+y_\B^2$ and $y_\A=y_\B$. We add both expressions (\ref{AB}) and substitute using (\ref{alphab}):
\begin{equation}\label{sum}
r_\A+r_\B=\alpha^2 (r_\A+r_\B)+2 \beta y_\A+2\gamma.
\end{equation}
The classical expression of the energy
\begin{equation}
H=\frac{\alpha^2+\beta^2-1}{2\gamma}
\end{equation}
gives a rational expression in $\beta$:
\begin{equation}
H=\frac{\alpha^2+\beta^2-1}{(1-\alpha^2)(r_\A+r_\B)-2y_\A\beta}
\end{equation}
The ordinate $\beta$ of the eccentricity vector appears as an excellent parameter for the family of orbits passing through $\A$ and $\B$. It is advertised in \cite{ava}. But, following \cite{ala}, we take a step forward and consider
\begin{equation}\label{universel}
\hat\beta={\beta\over \sqrt{1-\alpha^2}}.
\end{equation}
For $y_\A=y_\B\geq 0$, by the area of the triangle $\O\A\B$ expressed by Heron formula, 
\begin{equation}
2y_\A|x_\A-x_\B|=\sqrt{(r_\A+r_\B)^2-(x_\A-x_\B)^2}\sqrt{(x_\A-x_\B)^2-(r_\A-r_\B)^2},
\end{equation} and by (\ref{alpha}),
\begin{equation}\label{Huniversel}
H={\hat\beta^2-1\over r_\A+r_\B-\hat\beta \sqrt{(r_\A+r_\B)^2-(x_\A-x_\B)^2}}.
\end{equation}

\begin{proposition}
In the Euclidean plane, during a continuous change of a Keplerian arc around $\O$, with ends $\A$ and $\B$, leaving $\|\A\B\|$, $\|\O\A\|+\|\O\B\|$ and $H$ constant, the parameter $\hat\beta$ given by (\ref{universel}) is also constant.\end{proposition}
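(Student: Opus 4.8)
The plan is to extract everything from the identity (\ref{Huniversel}) and then close with a connectedness argument. Write $s=\|\O\A\|+\|\O\B\|$ and $c=\|\A\B\|$; by the triangle inequality $0\le c\le s$, so $\sqrt{s^2-c^2}$ is a well-defined real number. First I would remark that, at each instant of the deformation, one may re-adapt the frame $\O xy$ so that $y_\A=y_\B\ge 0$ (and, in the flat sub-case $y_\A=y_\B=0$, so that $x_\B<0<x_\A$). Then $(x_\A-x_\B)^2=\|\A\B\|^2=c^2$, so (\ref{Huniversel}) reads
$$H\bigl(s-\hat\beta\sqrt{s^2-c^2}\,\bigr)=\hat\beta^{2}-1 .$$
Since $H$, $s$ and $c$ are held constant along the deformation, this says that at every instant $\hat\beta$ is a root of one and the same quadratic
$$\hat\beta^{2}+H\sqrt{s^2-c^2}\;\hat\beta-(Hs+1)=0 ,$$
whose coefficients do not depend on the deformation parameter. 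Hence $\hat\beta$ takes its values in a fixed set of at most two real numbers.

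The second step is to observe that $\hat\beta$ depends continuously on the deformation parameter. By (\ref{universel}), $\hat\beta=\beta/\sqrt{1-\alpha^2}$, where $(\alpha,\beta)$ is the eccentricity vector read in the frame adapted to $\A$ and $\B$; as the arc and its endpoints move continuously, so do the eccentricity vector and that frame, hence so do $\alpha$ and $\beta$. Moreover $|\alpha|<1$ throughout (equality would force $\A$, $\O$, $\B$ collinear with $\A$ and $\B$ on a common ray, which is excluded), so the denominator never vanishes. A continuous real-valued function on a connected interval whose image lies in a fixed finite set is constant; therefore $\hat\beta$ is constant, which is the assertion.

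The point demanding care — rather than a real obstacle — is the bookkeeping of the adapted frame. Identity (\ref{Huniversel}) was derived in one fixed frame; what the argument uses is that it continues to hold instant by instant once the frame is re-adapted, and that the sign convention singling out one of the two unit normals to $\A\B$ (the one pointing from $\O$ towards the line $\A\B$; in the flat case $\O\in\A\B$ either normal will do, chosen continuously) can be propagated continuously, so that $\alpha$ and $\hat\beta$ are genuine continuous scalars and not merely defined up to sign. One should also note that the degenerate situations cause no trouble: when $\O\in\A\B$ one has $\sqrt{s^2-c^2}=0$ and the quadratic collapses harmlessly to $\hat\beta^{2}=Hs+1$, and if an arc of the family becomes rectilinear the identity (\ref{Huniversel}) persists in the limit $\gamma\to 0$. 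Finally it is worth stressing that connectedness is genuinely needed here: the map $\hat\beta\mapsto H$ defined by (\ref{Huniversel}) is not injective (it has an interior minimum, equal to $-2/(s+c)$), so $\hat\beta$ is not pinned down by $H$, $s$ and $c$ alone — it is the continuity of the deformation that forces it to stay on one branch.
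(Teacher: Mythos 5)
Your argument is correct and follows the same route as the paper: both rest on the identity (\ref{Huniversel}), whose coefficients $\|\O\A\|+\|\O\B\|$ and $\sqrt{(\|\O\A\|+\|\O\B\|)^2-\|\A\B\|^2}$ are fixed along the deformation, so that $\hat\beta$ is constrained by a fixed relation with the constant $H$. Your additional step is in fact a genuine refinement: since $H$ is \emph{not} an injective function of $\hat\beta$ (your quadratic $\hat\beta^{2}+H\sqrt{s^2-c^2}\,\hat\beta-(Hs+1)=0$ has two roots, corresponding to the interior minimum $-2/(s+c)$ of $\hat\beta\mapsto H$), the paper's terse ``consequently, $\hat\beta$ is constant'' implicitly needs exactly the continuity-plus-connectedness argument you supply, so your write-up closes a small gap rather than deviating from the paper's method.
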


\begin{proof}
Recall that in (\ref{universel}), $(\alpha,\beta)$ is the eccentricity vector in a frame $\O xy$ where the chord $\A\B$ is horizontal. In formula (\ref{Huniversel}), $|x_\A-x_\B|=\|\A\B\|$, $r_\A=\|\O\A\|$, $r_\B=\|\O\B\|$. The coefficients of (\ref{Huniversel}) are constant during the continuous change, and $H$ is also constant. Consequently, $\hat\beta$ is constant.
\end{proof}

We announced in the beginning of this section a simple way to get the parameter $\eta$ of the symmetric arc obtained by continuous change from a general arc. Since for the symmetric arc $\hat\beta=\beta=\eta$, this parameter is simply $\hat \beta$. The contrast between the simplicity of expression (\ref{universel}) and the subtlety of the deduction of its main property is quite astonishing. 

\section{Concluding remarks}

Lambert's theorem reduces the Lambert problem for a general triangle $\O\A\B$ to the cases of special triangles. The latter can be the isosceles triangles, with symmetry axis passing through $\O$, or the flat triangles, with $\B$ on the segment $\O\A$. Our study of qualitative questions, such as the number of solutions of the problem, shows that the reduction to the flat triangle gives better and simpler results than the reduction to the isosceles triangle. For example, the convexity of the traveling time function $v_\A\mapsto T_D^R$ on the direct arc appears as a stronger property than the corresponding property on the symmetric arc, since proposition \ref{CTDS} is obtained from proposition \ref{CTDR} in the second proof.

The reduction to the flat triangle may indeed be seen as the limit of a continuous change of Keplerian arc which finally gives a rectilinear arc. Two difficulties appear, which are easily resolved. The first is that the flat Keplerian arc, in the limit, may have a collision with the center. This difficulty is resolved by the extension after collision. The second happens if we think of complete orbits passing through two points $\A$ and $\B$, rather than of arcs going from $\A$ to $\B$. The interpretation of the parameter $\hat\beta$ in section \ref{parameter} induces such consideration. We should ask what is a rectilinear orbit passing through two points. The answer, again, is quite simple. Among the four simple arcs going from $\A$ to $\B$ in the negative energy case, we should consider that an arc belongs to the same orbit as its complementary. Thus, we have indeed four arcs belonging to two ``orbits'' which look the same, but should be distinguished. In the non-negative energy case, two similar orbits should be distinguished in the same way. To see these ``orbits'',
{\it we should cut a rectilinear orbit into two halves by cutting it at $\O$ and at the culmination point, if any.} We should then distinguish two cases: either $\A$ and $\B$ are placed on the same half, either they are placed on different halves. These two cases are the two ``orbits''. The sign of $v_\B$, or of the parameter $q$ in \cite{lb}, distinguishes between these two options.

\medskip

{\it Acknowledgement.} We wish to thank Christian Marchal for his comments. The second author has been partially supported by Spanish MINECO Grant with FEDER funds MTM2017-82348-C2-1-P. He is indebted to the IMCCE (Observatoire de Paris), and in particular is grateful to Alain Chenciner for his hospitality.


\begin{thebibliography}{}

\bibitem{ala}
Albouy, A.
\newblock Lectures on the two-body problem
\newblock {\it Classical and Celestial Mechanics. The Recife Lectures.} H.\ Cabral and F.\ Diacu ed., Princeton University Press, Princeton (2002), pp.\ 63--116.

\bibitem{alb}
Albouy, A.
\newblock Lambert's Theorem: Geometry or Dynamics?
\newblock {\it Celestial Mechanics and Dynamical Astronomy}, 131 (2019), 40.

\bibitem{ava}
Avanzini, G.
\newblock A simple Lambert algorithm
\newblock{\it J.\ Guid.\ Contr.\ Dynam.} 31 (2008), pp.\ 1587--1594.

\bibitem{bfs}
Battin, R.H.,  Fill, T.J., Shepperd, S.W.
\newblock A new transformation invariant in the orbital boundary-value problem
\newblock {\it J.\ Guidance and Control}, 1 (1978), pp.\ 50--55.

\bibitem{bopp}
Bopp, K.
\newblock Leonhard Eulers und Johann Heinrich Lamberts Briefwechsel,
\newblock {\it Abhandlungen der Preussischen Akademie der Wissenschaften, Physikalisch-Ma\-the\-matische Klasse},
\newblock 2 (1924), pp.\ 7--37.

\bibitem{dff}
De La Torre, D., Flores, R., Fantino, E.
\newblock On the solution of Lambert's problem by regularization
\newblock {\it Acta Astronautica}, 153 (2018), pp.\ 26--38.

\bibitem{eli}
Eliasberg, P.E.
\newblock {\it Introduction to the theory of flight of artificial earth satellites}
\newblock Israel Program for Scientific Translations
\newblock (1967), 345 pp.


\bibitem{gauss}
Gauss, K.F.
\newblock {\it Theoria Motus Corporum Coelestium in sectionibus conicis solem ambientium},
\newblock Perthes \& Besser, Hamburg (1809);
\newblock translation by C.H.\ Davis,
\newblock Little, Brown \& Co, New York (1857);
\newblock  Dover, New York (1963).

\bibitem{lagrange4}
Lagrange, J.L.
\newblock Sur une mani{\`e}re particuli{\`e}re d'exprimer le temps dans
les sections coniques, d{\'e}crites par des forces tendantes au foyer et
r{\'e}ciproquement proportionnelles aux carr{\'e}s des distances,
\newblock {\it Nouveaux m{\'e}moires de l'Acad{\'e}mie royale des sciences et belles-lettres},
\newblock 1778 (1780), pp.\ 181--202;
\newblock {\it \OE uvres} IV, pp.\ 559--582.

\bibitem{lambert}
Lambert, J.H.
\newblock {\it Insigniores Orbitae Cometarum Proprietates}, Augustae
Vindelicorum, Augsburg (1761);
\newblock German translation in: {\it Abhandlungen zur Bahnbestimmung der Cometen},
\newblock Deutsch herausgegeben und mit Anmerkungen versehen von J.\ Bauschinger, \newblock Ostwald's Klassiker der exakten Wissenschaften, Verlag von Willen Engelmann, no 133 (1902).


\bibitem{lb}
Lancaster, E.R, Blanchard, R.C.
\newblock A unified form of Lambert's theorem
\newblock Nasa technical note D-5368 (1969), 14 pp.

\bibitem{simo}
Sim\'o, C.
\newblock Soluci\'on del problema de Lambert mediante regularizaci\'on,
\newblock {\it Collectanea Mathematica},
\newblock 24 (1973), pp.\ 231--247.





\end{thebibliography}
\end{document}